\documentclass[11pt]{article}

\usepackage[margin=1.1in]{geometry}

\usepackage{amsmath,amssymb,amsthm,mathtools}
\usepackage{bm}
\usepackage{booktabs}
\usepackage{enumitem}
\usepackage{algorithm}
\usepackage{algpseudocode}
\usepackage[expansion=false]{microtype}
\usepackage[colorlinks=true,linkcolor=blue!50!black,citecolor=blue!50!black,urlcolor=blue!50!black]{hyperref}
\usepackage{xcolor}
\usepackage{tikz}
\usetikzlibrary{positioning,calc,fit,arrows}
\tikzset{
  site/.style={draw, rounded corners=1pt, minimum height=5.2mm,
               minimum width=15mm, font=\scriptsize, inner sep=2pt},
  ev/.style={draw, minimum height=5.2mm, minimum width=15mm,
             font=\scriptsize, inner sep=2pt, fill=black!4},
  stage/.style={draw, rounded corners=2pt, align=center,
                font=\scriptsize, inner sep=3.5pt, minimum height=7mm},
  lab/.style={font=\scriptsize\itshape},
  tinylab/.style={font=\tiny},
  flow/.style={->, >=stealth, semithick},
  okline/.style={->, >=stealth, thick, black!55},
  badline/.style={->, >=stealth, thick, densely dashed},
}

\newtheorem{theorem}{Theorem}[section]
\newtheorem{lemma}[theorem]{Lemma}
\newtheorem{proposition}[theorem]{Proposition}
\newtheorem{corollary}[theorem]{Corollary}
\theoremstyle{definition}
\newtheorem{definition}[theorem]{Definition}
\newtheorem{assumption}{Assumption}
\newtheorem{remark}[theorem]{Remark}

\newcommand{\WUR}{\mathrm{WUR}}
\newcommand{\Rb}{R_{\mathrm{b}}}
\newcommand{\Rf}{R_{\mathrm{f}}}
\newcommand{\pref}{\theta}
\newcommand{\front}{\varphi}
\newcommand{\match}{\mu}
\newcommand{\sev}{\mathrm{sev}}
\newcommand{\prot}{\mathrm{P}}
\newcommand{\intf}{\mathrm{I}}
\newcommand{\migr}{\mathrm{M}}
\newcommand{\ind}[1]{\mathbb{1}\!\left[#1\right]}
\newcommand{\E}{\mathbb{E}}
\newcommand{\Prob}{\mathbb{P}}

\title{\textbf{A Telemetry-Driven Model for Quantifying\\ Upgrade Risk in Durable Workflow Execution}\\[0.6em]
\large Version-skew risk inference for event-sourced workflows\\ without dry-run execution}

\author{%
Luca Maraschi \qquad Matteo Collina\\[0.2em]
\normalsize Platformatic Inc.\\
\normalsize \texttt{\{luca,matteo\}@platformatic.dev}}

\date{July 2026}

\begin{document}
\maketitle

\begin{abstract}
Durable workflow engines reconstruct execution state by deterministically
replaying an immutable event log, coupling every in-flight run to the code
version that produced its history: a new deployment can invalidate the replay
of runs started under the old version, silently corrupting state or halting
progress. Existing mitigations---pinning, patch gates, side-by-side
deployment---treat every change as maximally dangerous and drain old versions,
untenable for workflows that sleep for weeks.

We present a closed-form probabilistic model that quantifies the risk of
upgrading in-flight runs from workflow version $V_1$ to $V_2$ using only a
static structural diff and telemetry the protocol already persists---event
logs, step payloads, historical paths---with no dry-run, sandbox, or shadow
execution. Risk decomposes along three axes (protocol, interface, state
migration) and combines an \emph{exact} backward (rehydration) term, computed
on recorded prefixes modulo trace equivalence of concurrent completions, with
a \emph{probabilistic} forward term from hitting probabilities in an
empirically estimated Markov model of control flow. Estimation is Bayesian
throughout, so the Workflow Upgrade Risk ($\WUR$) score carries a credible
interval and thin telemetry surfaces as uncertainty. We prove that a zero
backward-risk verdict certifies safe rehydration under the new version, and
derive a policy partitioning runs into \emph{migrate}, \emph{review}, and
\emph{pin} classes. Finally we drop the inter-run independence assumption:
coupling through hooks, hierarchy, and shared resources is captured by an
empirical coupling graph, fleet risk becomes the least fixpoint of a
failure-contagion operator, and the coupling-aware migrate/pin partition is
computed exactly as a minimum $s$--$t$ cut.
\end{abstract}

\section{Introduction}

Durable execution frameworks---Temporal~\cite{temporal}, Azure Durable
Functions~\cite{durablefunctions}, and more recently the Workflow
SDK~\cite{wdk}---let developers express long-running, stateful
processes as ordinary code. The engine persists an append-only log of events
(step invocations, results, timers, external signals); when a process must
resume after a crash, a suspension, or a redeployment, the engine re-executes
the orchestration function from the beginning and substitutes recorded results
for completed steps. Correctness of this \emph{replay} rests on a determinism
contract: given the same event history, the orchestration code must make the
same decisions.

The contract has a corollary that is under-examined in both literature and
practice: the event log is only meaningful \emph{relative to the code that
produced it}. When version $V_2$ of a workflow is deployed while runs started
under $V_1$ are still in flight, three distinct failure modes arise:

\begin{enumerate}[label=(F\arabic*),leftmargin=3em]
  \item \textbf{Replay divergence.} $V_2$ reaches a different step than the
  log expects (a step was inserted, removed, or reordered). Depending on the
  engine this manifests as a hard nondeterminism error or---worse---as a
  silently mismatched result substitution.
  \item \textbf{Rehydration failure.} The recorded payload of a completed
  step no longer satisfies the shape that $V_2$'s downstream code consumes
  (a field was removed, renamed, or its type narrowed).
  \item \textbf{Behavioral drift.} Replay succeeds, but steps not yet
  executed behave differently under $V_2$; the run completes with semantics
  neither version's author intended.
\end{enumerate}

The state of the art is operational, not analytical. Temporal offers
worker versioning and an explicit \texttt{patched()} API~\cite{temporalver}
that gates code paths on a run's provenance; platform-level skew protection
pins runs to the deployment that created them and drains old versions.
These mechanisms are sound but uniformly pessimistic: they assume every
change is incompatible with every in-flight run. For short-lived runs the
cost is idle capacity; for workflows that legitimately sleep for
months---subscription lifecycles, compliance timers, drip
campaigns---pessimism means a critical bug fix cannot reach a stranded run
at all.

This paper develops the missing analytical layer. We ask: \emph{given the
diff between $V_1$ and $V_2$ and the telemetry the protocol already
persists, what is the probability that upgrading a given in-flight run
fails, and how does that risk aggregate over the fleet?} Our central
observation is that event-sourced workflow engines are an unusually
favorable setting for change-risk inference. In a general distributed
system, risk models must infer interaction patterns from sampled traces;
here, the protocol persists \emph{complete} histories---every path ever
taken, every payload ever produced---as a byproduct of its durability
guarantee. Consequently the backward-looking component of upgrade risk can
be computed \emph{exactly}, and only the forward-looking component requires
probabilistic estimation.

\paragraph{Contributions.}
\begin{enumerate}[leftmargin=2em]
  \item A formal model of the durable-workflow protocol (event log, replay
  relation, determinism contract) sufficient to state upgrade compatibility
  precisely (\S\ref{sec:model}).
  \item A three-axis change taxonomy---protocol, interface,
  migration---with per-node incompatibility functions computable from the
  static diff and payload telemetry (\S\ref{sec:changes}).
  \item The $\WUR$ risk model: an exact backward term over recorded
  prefixes and a forward term based on hitting probabilities in an
  empirically estimated absorbing Markov chain, composed by a noisy-or
  operator (\S\ref{sec:risk}).
  \item A Bayesian estimation procedure with credible intervals, using
  Dirichlet posteriors on transitions and Beta posteriors on payload
  compatibility, pooled across axes by confidence-weighted log-odds
  (\S\ref{sec:estimation}).
  \item Soundness and monotonicity results: $\Rb(r)=0$ under conservative
  matching certifies safe rehydration of run $r$; a
  \emph{differential-demand lemma} confining contract inference to changed
  consumer slices; and a canonical-form procedure deciding prefix validity
  modulo trace equivalence of concurrent completions in
  $O(L \log L)$ (\S\ref{sec:properties}).
  \item An $O(|\Delta|\cdot|\pref|_{\max}\cdot|\mathcal{R}| + |N|^3)$
  algorithm and a threshold policy that partitions the fleet into
  migrate/review/pin classes with expected-loss ranking
  (\S\ref{sec:algorithm}--\ref{sec:policy}).
  \item An inter-run coupling extension (\S\ref{sec:coupling}): fleet
  risk as the least fixpoint of a failure-contagion operator on an
  empirically constructed coupling graph; mixed-version window risk
  scored by the interface machinery applied \emph{across} channels and
  weighted by logged event rates; and an exact coupling-aware
  migrate/pin partition via $s$--$t$ minimum cut.
\end{enumerate}

These contributions form three deliberate layers, each consuming the
previous one's output and nothing else: a \emph{certification layer}
(\S\ref{sec:model}--\ref{sec:properties}) that decides, per run, what
recorded history already proves; a \emph{risk layer}
(\S\ref{sec:risk}--\ref{sec:policy}) that quantifies what history can
only make probable; and an \emph{operations layer}
(\S\ref{sec:coupling}) that turns per-run verdicts into a fleet
decision under coupling. A reader interested only in the certification
result can stop after \S\ref{sec:properties}; an operator deploying the
model needs all three, because per-run verdicts that ignore coupling
recommend partitions that violate saga atomicity in the mixed-version
window.

Everything in the model is world-agnostic: it consumes only the protocol
surface (run listings, event logs, recorded payloads) and never touches a
specific persistence backend or executes workflow code.

\section{Related Work}\label{sec:related}

\paragraph{Durable execution and versioning.}
Temporal's versioning facilities~\cite{temporalver} (patch markers, worker
build IDs, deployment-based routing) and platform-level skew protection are
\emph{mechanisms} for avoiding incompatible replay; they provide no measure
of whether a given change is in fact incompatible, and default to pinning.
Azure Durable Functions documents versioning as a set of manual
strategies~\cite{durablefunctions}. Our model is complementary: it supplies
the decision function that these mechanisms lack.

\paragraph{Event sourcing and schema evolution.}
Event-sourced systems~\cite{fowler-es} face upgrade problems for
\emph{data}: old events must remain readable by new code. Schema-evolution
disciplines in serialization frameworks (Avro, Protocol Buffers) define
compatibility lattices over record types~\cite{kleppmann}. We adopt the
lattice view for the interface axis but extend the problem to
\emph{control flow}: in durable workflows the event log encodes not only
data but the shape of a partially executed computation.

\paragraph{Change-risk prediction.}
Defect- and risk-prediction models in software engineering estimate failure
probability from change metrics (churn, coupling,
history)~\cite{nagappan}. These models are statistical over code artifacts
and outcomes; they do not exploit an execution protocol. Our setting
inverts the balance: the protocol's determinism contract makes a large part
of the risk \emph{computable} rather than merely predictable.

\paragraph{Markov models of program behavior.}
Using empirically estimated Markov chains to model control flow and compute
reachability or hitting probabilities is classical~\cite{norris,whittaker}.
We apply the machinery to the step graph of a workflow, with the notable
property that the training data (complete path histories) is exhaustive
rather than sampled.

\paragraph{Energy minimization on graphs.}
Exact MAP inference for binary pairwise Markov random fields via minimum
cut is classical~\cite{greig,kz}; multi-label extensions rely on
move-making approximations~\cite{bvz}. We import this machinery for the
coupling-aware migrate/pin partition, where submodularity of the pairwise
terms holds by construction.

\subsection{The Engine Landscape}\label{sec:engines}

The model is generic over durable execution engines, but genericity is
earned per engine, not assumed: what varies is the \emph{site identity
discipline} that the matching $\match$ of Definition~\ref{def:matching}
must reproduce, the strength of the determinism contract
(Assumption~\ref{ass:det}), and the telemetry surface
(Assumption~\ref{ass:log}). Table~\ref{tab:engines} fixes these
parameters for the engines we are aware of; the rest of the paper refers
to this table rather than re-litigating engine differences at each use.

\begin{table}[t]
\centering\small
\setlength{\tabcolsep}{4pt}
\begin{tabular}{p{2.1cm}p{2.9cm}p{2.5cm}p{2.7cm}p{2.6cm}}
\hline
engine & site identity & determinism & native versioning & payload
telemetry \\
\hline
Workflow SDK~\cite{wdk} & global invocation ordinal $+$
module-qualified name (one ID sequence for steps, timers, hooks;
verified, Rem.~\ref{rem:conformance}) & virtualized VM: seeded RNG, log-driven
clock & runs pinned to starting deployment; explicit cancel{+}rerun,
rerun-on-latest, continue-as-new & full event log with
payloads (\texttt{runs.list}, \texttt{events.list}) \\[2pt]
Temporal~\cite{temporal} & command sequence matched positionally against
history, type-checked & SDK-enforced APIs; sandbox in TS & patch markers,
worker build IDs, pinned vs.\ auto-upgrade
routing~\cite{temporalver} & full histories via visibility/history
APIs \\[2pt]
Azure Durable Functions~\cite{durablefunctions} & task scheduling order
against history, name-checked & replay-safe API discipline & side-by-side
deployment guidance (pin by default) & history tables \\[2pt]
Restate~\cite{restatever} & journal entries matched positionally
(mismatch $\Rightarrow$ RT0016) & \texttt{ctx.run} journaling of
nondeterminism & immutable deployments; invocations pinned by default,
manual move & journal introspection (SQL) \\[2pt]
Cloudflare Workflows~\cite{cfworkflows} & step \emph{name} as cache key
$+$ occurrence count & developer discipline; orchestrator code outside
steps re-executes & engine-managed; instances resume on current code &
per-step persisted state \\[2pt]
AWS Step Functions~\cite{sfn} & state-machine definition (no code
replay) & n/a (declarative) & definition versions and aliases; executions
finish on their version & execution history \\
\hline
\end{tabular}
\caption{Engine parameters for the model. ``Site identity'' is the
discipline $\match$ must preserve; ``native versioning'' is the mechanism
the engine offers, for which this paper supplies the missing decision
function.}
\label{tab:engines}
\end{table}

Three observations structure the comparison. First, the
\emph{ordinal-identity family} (Workflow SDK, Temporal, Azure Durable
Functions, Restate) shares the strict discipline formalized in
\S\ref{sec:model}: any insertion, deletion, or reordering of sites ahead
of recorded events is fatal, replay failures are hard errors, and the
model applies with $\match$ as defined. It also shares pin-by-default
versioning: Workflow SDK pins every run to the deployment that started
it, and its versioning guide's recovery procedure for buggy
deployments---find the affected runs, cancel, rerun on the latest
deployment---presupposes exactly the affected-run determination this
model computes, while its \texttt{deploymentId:\ "latest"} escape
hatch explicitly delegates the cross-deployment compatibility judgment
to the developer. Restate's documentation is the
most explicit ancestor of our taxonomy, enumerating as unsafe exactly the
reordering, insertion/removal, input-change, and conditional-logic
changes that populate $\Delta_\prot$ and $\Delta_\intf$---but, like
Temporal's pinned-versus-auto-upgrade choice and Durable Functions'
side-by-side guidance, it leaves the per-run decision to human judgment.
These engines pay for safety with pin-by-default: old deployments must be
kept alive until their pinned runs drain, with the forgone-fix exposure
$\beta_r$ of \S\ref{sec:policy} accruing silently. Temporal's patch
API deserves its own sentence, since it is the most sophisticated native
mechanism: \texttt{patched()} lets one worker serve both versions by
branching in workflow code, which solves mixed-version execution but
creates a new liability---patches accumulate, and knowing when a patch
is removable requires knowing that no in-flight run can still traverse
its old branch. That is a hitting probability on recorded frontiers,
i.e., precisely the forward machinery of \S\ref{sec:risk}; the model
answers ``which runs force this patch to exist'' with a number. Our
model is, in general, the decision procedure these engines' escape
hatches presuppose.

Second, \emph{name-keyed identity} (Cloudflare Workflows) trades
strictness for tolerance: steps are memoized by name with an occurrence
counter, so inserting or reordering \emph{differently named} steps is
tolerated where the ordinal family diverges---but a renamed step is a
cache miss that silently \emph{re-executes}, converting failure class F1
(hard divergence) into F2-like duplicated side effects whose severity is
governed by idempotency, precisely the quantity \S\ref{sec:severity}
estimates from retry telemetry. The framework accommodates this by
parameterizing $\match$: name-multiset matching with occurrence indices
replaces ordinal preservation, $\Delta_\prot$ coarsens, and the severity
axis does proportionally more work. A further difference widens
$\Delta_\prot$ in the other direction: orchestrator code outside steps
is re-executed rather than virtualized, so \emph{any} change to it is
protocol-relevant, not only changes at sites.

Third, \emph{definition-based execution} (Step Functions) versions the
state machine rather than replaying code: executions complete on the
definition they started with, so backward risk trivializes (there is no
rehydration of recorded state into new code) and the model reduces to its
forward-risk and coupling components applied at the boundary between
definition versions.

The remainder of the paper develops the model for the ordinal-identity
discipline and verifies it against Workflow SDK as the reference
instantiation; the parameterization above delimits what transfers
elsewhere and at what cost.

\section{System Model}\label{sec:model}

We formalize the protocol implemented by Workflow SDK-class engines. The model is
deliberately minimal; any engine satisfying
Assumptions~\ref{ass:det}--\ref{ass:log} is covered.

\begin{definition}[Workflow version]\label{def:version}
A \emph{workflow version} $V$ is a deterministic orchestration program
whose observable actions are drawn from a finite set $N_V$ of \emph{sites}:
step invocations, timer starts (\texttt{sleep}), and external-event waits
(hooks). Each site $v \in N_V$ carries a label
$\ell(v) = (\mathrm{name}(v), \mathrm{kind}(v))$ and, for step sites, an
\emph{output contract} $C_V(v)$ (the type of the recorded result) and a set
of \emph{consumption contracts} $D_V(v)$ (the constraints that downstream
code in $V$ places on that result).
\end{definition}

\begin{definition}[Step graph]\label{def:graph}
The \emph{step graph} of $V$ is a directed graph
$G_V = (N_V \cup \{n_0, n_\bot\}, E_V)$ with distinguished entry $n_0$ and
exit $n_\bot$, where $(u,w) \in E_V$ iff some execution of $V$ performs
site $w$ immediately after site $u$. Edges out of branch points carry
predicates over previously recorded values. Structured concurrency
(fork--join groups such as \texttt{Promise.all}) gives $G_V$ a
series--parallel structure whose exact treatment, via trace equivalence,
is developed in \S\ref{sec:concurrency}.
\end{definition}

$G_V$ is computable by static analysis of the orchestration function: the
determinism contract confines control flow to a sandboxed subset of the
language, which is precisely what makes the extraction tractable.

\begin{definition}[Run, log, prefix, frontier]\label{def:run}
A \emph{run} $r$ of $V$ is identified by its immutable \emph{event log}
$L_r = (e_1, \dots, e_{k_r})$, where each event $e_i$ records a site
occurrence and, for completed steps, a payload $x_i$. The \emph{path
prefix} $\pref_r = (v_1, \dots, v_{k_r}) \in N_V^{*}$ is the sequence of
sites in $L_r$; the \emph{frontier} $\front_r$ is the site at which $r$ is
currently suspended (a pending step, timer, or hook).
\end{definition}

\begin{definition}[Replay]\label{def:replay}
\emph{Replay} of $L_r$ under version $V'$ executes $V'$ from its entry.
Each site invocation draws the next identifier from a single
deterministic per-run sequence shared by all site kinds (its
\emph{invocation ordinal}); recorded events carry the identifier drawn at
the original invocation, together with the site name. The log is consumed
in order: an event is matched to the pending invocation holding the same
identifier, subject to a name-equality check, and its recorded payload is
substituted. Replay \emph{diverges} if an event cannot be matched or the
name check fails; it \emph{succeeds} if all of $L_r$ is consumed without
divergence and without a runtime fault while consuming substituted
payloads; execution then proceeds live from the frontier. Site identity
is thus the pair (\emph{global invocation ordinal}, name)---strictly
finer than name-plus-path-position.
\end{definition}

\begin{remark}[Protocol conformance]\label{rem:conformance}
Definition~\ref{def:replay} and Assumptions~\ref{ass:det}--\ref{ass:log}
were checked against the reference implementation
(\texttt{@workflow/core}~\cite{wdk}): identifiers are ULIDs drawn from a
monotonic generator over the virtualized RNG, seeded by the run
identifier together with the workflow name and the run's \emph{recorded}
deployment identifier---replay-stable across deployments, which is
precisely what makes cross-version replay well-defined; the virtual clock
is initialized from the run identifier's embedded timestamp and advanced
to each consumed event's timestamp; unmatched events raise a hard
divergence error; end-of-log suspends at the frontier; retry events
(\texttt{step\_retrying}, repeated \texttt{step\_started}) are persisted,
grounding the re-execution telemetry of \S\ref{sec:severity}; and the
world surface exposes exactly the run and event listings that
Assumption~\ref{ass:log} and Algorithm~\ref{alg:wur} consume.

Two field observations from a live implementation of the analyzer
sharpen the telemetry lane. First, the event \emph{vocabulary} drifts
across releases: as of \texttt{workflow@5.0.0-beta.21} (confirmed on
beta.26), the default lazy inline step path emits no
\texttt{step\_created} event at all---the step name and input arrive on
the first \texttt{step\_started}---so an analyzer keyed to creation
events silently loses every step site. Implementations should key site
discovery on \texttt{step\_started} and treat the vocabulary as
versioned alongside the engine. Second, invocation ordinals are
\emph{directly recoverable from persisted data}: concurrent branches'
events are appended in scheduling order, but because all identifiers are
drawn from one monotonic ULID sequence, lexicographically sorting the
recorded correlation identifiers recovers the exact global invocation
order across steps, timers, and hooks---observed live on a
\texttt{Promise.all} whose events landed in log order $M, K, N$ while
their ULIDs ordered $K < M < N$, the true source-array order. The
analyzer never infers fork order from timestamps.

Race semantics are likewise implementation-confirmed, not assumed. The
timer consumer explicitly treats a timer completion as ``a
branch-deciding resolution the workflow may \texttt{Promise.race}
against a hook payload'' and orders competing deliveries
\emph{deterministically by event-log position} through delivery
barriers: a later-in-log hook payload is delivered only after an
earlier-in-log timer resolution and vice versa. The recorded log order
therefore \emph{is} the race outcome during replay, which is precisely
why Definition~\ref{def:indep} grants order-observable groups no
independent pairs. Losing branches are not cancelled: their consumers
remain registered, and a loser's completion appearing later in the log is
consumed normally. Explicit hook disposal is itself a persisted protocol
event (\texttt{hook\_disposed}), consumed by correlation as a terminal
marker on replay; at workflow completion, pending user hooks and timers
persist as pending invocations (only system hooks are implicitly
disposed), so a suspended run's frontier faithfully includes race losers.
\end{remark}

Figure~\ref{fig:identity} illustrates the identity discipline on a
suspended run: the same recorded prefix replayed against two candidate
versions, one insertion at the tail (harmless: every recorded event still
meets the consumer holding its identifier) and one at the head (fatal:
every subsequent ordinal shifts, so the first recorded event already
meets a consumer with a different name).

\begin{figure}[t]
\centering
\begin{tikzpicture}[node distance=2.6mm]
\node[ev] (e0) {0 \,\texttt{validate}};
\node[ev, below=of e0] (e1) {1 \,\texttt{charge}};
\node[ev, below=of e1] (e2) {2 \,\texttt{notify}};
\node[draw, densely dotted, below=of e2, minimum width=15mm,
      minimum height=5.2mm, font=\scriptsize] (fr) {\emph{frontier}};
\node[lab, above=1.5mm of e0] (rl) {recorded log $L_r$};
\node[tinylab, anchor=west] at ($(e0.east)+(0.5mm,0)$) {\,};
\node[site, left=14mm of e0] (a0) {0 \,\texttt{validate}};
\node[site, below=of a0] (a1) {1 \,\texttt{charge}};
\node[site, below=of a1] (a2) {2 \,\texttt{notify}};
\node[site, below=of a2, fill=black!10] (a3) {3 \,\texttt{audit}};
\node[lab, above=1.5mm of a0] {$V_2'$: insert at tail};
\draw[okline] (e0.west) -- (a0.east);
\draw[okline] (e1.west) -- (a1.east);
\draw[okline] (e2.west) -- (a2.east);
\draw[flow, densely dotted] (fr.west) -- (a3.east);
\node[tinylab, below=1.4mm of a3, align=center]
  {prefix valid; replay proceeds\\ live: $\Rb$ from demand terms only};
\node[site, right=14mm of e0, fill=black!10] (b0) {0 \,\texttt{audit}};
\node[site, below=of b0] (b1) {1 \,\texttt{validate}};
\node[site, below=of b1] (b2) {2 \,\texttt{charge}};
\node[site, below=of b2] (b3) {3 \,\texttt{notify}};
\node[lab, above=1.5mm of b0] {$V_2''$: insert at head};
\draw[badline] (e0.east) -- (b0.west)
      node[font=\small, pos=0.5, above=0.2mm] {$\times$};
\node[tinylab, below=1.4mm of b3, align=center]
  {event $(0,\texttt{validate})$ meets consumer\\
   $(0,\texttt{audit})$: name check fails,\\
   whole remaining prefix invalid, $\Rb=1$};
\end{tikzpicture}
\caption{Site identity is the pair (global invocation ordinal, name).
The same suspended run replayed against two candidates: a tail insertion
leaves every recorded identifier matched (left); a head insertion shifts
every subsequent ordinal, and replay diverges on the first event
(right). Any site kind---step, timer, or hook---shifts the sequence
equally.}
\label{fig:identity}
\end{figure}

\subsection{Concurrency and Trace Equivalence}\label{sec:concurrency}

Structured concurrency complicates Definition~\ref{def:run}: within a
fork--join group the log records step completions in \emph{completion
order}, which is scheduling nondeterminism, not program semantics. Two
logs differing only by such permutations denote the same computation, and
prefix validity must not distinguish them.

\begin{definition}[Independence]\label{def:indep}
Sites $u, w$ are \emph{independent}, $(u,w) \in I$, iff they lie on
distinct branches of the same \emph{join-observed} parallel group---one
whose results the orchestrator can first observe at the join, as with
\texttt{Promise.all}. Independence quotients \emph{completion order in
the log only}: because site identity includes the global invocation
ordinal (Definition~\ref{def:replay}), the \emph{invocation order of
branches at the fork is protocol identity}, and reordering branches in
source---even within a join-observed group whose semantics are
order-free---is a protocol change in $\Delta_\prot$, not an $\equiv_I$
permutation. Groups with observable completion order
(\texttt{Promise.race}, \texttt{Promise.any}, incremental awaiting)
contribute no independent pairs at all: there, recorded order \emph{is}
semantics, and permutation sensitivity is correct rather than
spurious---a property the reference implementation enforces mechanically
by barriering branch-deciding deliveries in event-log order
(Remark~\ref{rem:conformance}).
\end{definition}

\begin{definition}[Trace equivalence]\label{def:trace}
$\equiv_I$ is the least congruence on site sequences that permutes
adjacent independent occurrences (Mazurkiewicz trace
equivalence~\cite{mazurkiewicz}). A run's log is one linearization of its
trace $[\pref_r]_{\equiv_I}$; all definitions over prefixes are lifted to
traces. In particular, prefix validity in Definition~\ref{def:rb} becomes:
$\pref_r \sqsubseteq_\match G_2$ iff some $\pref' \equiv_I \pref_r$ maps
under $\match$ to a valid path of $G_2$---equivalently, the run's
partially ordered prefix embeds into $G_2$'s series--parallel order.
\end{definition}

Branch correspondence within a parallel group is accordingly
\emph{positional}, not label-based: $\match$ must preserve the fork's
branch invocation order, and a permuted branch list is classified into
$\Delta_\prot$ even when a multiset comparison would succeed. The
recorded \emph{completions} of matched branches, by contrast, may appear
in any log order.

\begin{assumption}[Determinism contract]\label{ass:det}
Orchestration code is deterministic given the event log: nondeterministic
primitives are virtualized (e.g., pseudo-randomness seeded by the run
identifier; clock reads fixed to log-derived timestamps), and side effects
are confined to steps. Moreover, in join-observed parallel groups the
orchestration state after the join is a function of the \emph{multiset} of
branch results, independent of completion order.
\end{assumption}

\begin{assumption}[Complete telemetry]\label{ass:log}
The protocol persists, for every run past and present: its full event log,
including step payloads (or retrievable references to them), its status,
and the version that created it. The set of completed runs is denoted
$\mathcal{H}$ (history); the set of in-flight runs $\mathcal{R}$.
\end{assumption}

Assumption~\ref{ass:log} is not an idealization---it restates the
durability guarantee. It is the reason this domain admits exact backward
analysis where general distributed systems do not.

\begin{assumption}[Version availability]\label{ass:versions}
Both versions' code is available for comparison at decision time, in the
representation the runtime executes.
\end{assumption}

Like Assumption~\ref{ass:log}, this is less an idealization than a
restatement of how these engines already work. Pin-by-default versioning
\emph{requires} immutable, retained deployments---the engine cannot
replay a pinned run without the artifact that created it---so $V_1$'s
code is guaranteed to exist at $V_2$'s decision time by the same
mechanism the model serves; in git-based deploy flows, both versions
additionally resolve to commits. Two practicalities matter. The diff
must operate on the post-build representation, where module-qualified
site names are assigned---diffing raw source with different naming rules
than the runtime's yields a different protocol view; the build tooling
that assigns these names can equally enumerate them. And the diff's
scope is the full bundle closure: a dependency bump changes sites
without touching first-party source, so changed modules are identified
from the lockfile and the out-of-scope warning rule of
\S\ref{sec:algorithm} applies to anything the diff cannot see.

\section{Change Taxonomy and Incompatibility Functions}\label{sec:changes}

Let $V_1, V_2$ be the deployed and candidate versions, with step graphs
$G_1, G_2$.

\begin{definition}[Matching]\label{def:matching}
A \emph{matching} is a partial injective map $\match: N_1 \rightharpoonup
N_2$ such that $\ell(v) = \ell(\match(v))$, $\match$ preserves the
relative order of matched sites along every path of $G_1$, and---because
site identity includes the global invocation ordinal
(Definition~\ref{def:replay})---$\match$ preserves the invocation ordinal
of every site within the prefix region of any live run: an insertion or
deletion of \emph{any} site kind (step, timer, or hook) ahead of recorded
events shifts all subsequent ordinals and invalidates them wholesale.
Sites are paired by label and structural position. Two properties of the
label deserve emphasis, both engine-real. First, label granularity is
engine-defined and in Workflow SDK is \emph{module-qualified}
(\texttt{step//./workflows/e2e//add}): moving a step function to another
file changes every affected site's name, and because the runtime's own
replay name check compares qualified names, the resulting fleet-wide
$\Rb = 1$ is a \emph{true} verdict for a semantically identical
refactor---the engine genuinely diverges. A path-normalized matching
(pairing on the trailing function segment) may refine the
\emph{diagnosis} from delete-plus-insert to rename, preserving forward
telemetry lineage and letting policy propose a rename-map migration, but
it must never weaken the backward verdict, which the engine's identity
discipline fixes. Second, the \emph{call-site fingerprint} (argument
data-flow shape at the invocation) is a soundness requirement wherever a
label repeats, not an ambiguity tiebreak: swapping two calls to the same
step function---\texttt{add(a,3)} and \texttt{add(b,5)}---leaves the
(ordinal, name) sequence unchanged, so replay \emph{accepts} the log and
substitutes each recorded result into the other call site, the silent
result-mismatch failure of class F2; a one-line refactor produces it on
the reference repository's own corpus, where repeated invocations of one
step (loops, fan-out) are common. Matching must therefore distinguish
same-labeled sites by fingerprint, classifying a fingerprint-crossing
swap into $\Delta_\intf$ with $\rho = 1$ at the affected sites even
though the protocol view is unchanged. Branch predicates guarding
matched edges must be syntactically equivalent after normalization;
otherwise the edge is treated as changed.
\end{definition}

Syntactic predicate comparison is conservative by design: semantic
equivalence is undecidable, and Theorem~\ref{thm:sound} requires only that
matching never equates sites or predicates that differ observably.

\begin{definition}[Change set]\label{def:delta}
The \emph{change set} $\Delta = \Delta_\prot \cup \Delta_\intf \cup
\Delta_\migr$ comprises:
\begin{itemize}[leftmargin=2em]
  \item $\Delta_\prot$ (\emph{protocol}): sites of $N_1$ outside
  $\mathrm{dom}(\match)$, sites of $N_2$ outside $\mathrm{ran}(\match)$,
  order violations, and changed branch predicates. These alter the language
  of site sequences that $G_2$ accepts.
  \item $\Delta_\intf$ (\emph{interface}): matched sites $v$ with
  $D_2(\match(v)) \not\sqsupseteq C_1(v)$ in the compatibility lattice
  of Definition~\ref{def:lattice}---the new consumers demand more than the
  old producer guaranteed.
  \item $\Delta_\migr$ (\emph{migration}): changes to the serialized shape
  of workflow-scope state that crosses suspension boundaries (captured
  closures, hook payload contracts).
\end{itemize}
\end{definition}

\begin{definition}[Compatibility lattice]\label{def:lattice}
For record-like payload types, define the partial order $\sqsubseteq$
generated by: adding an optional field, widening a field's type, and
relaxing a constraint are $\sqsubseteq$-increasing (compatible); removing a
field, renaming a field, adding a required field, and narrowing a type are
incomparable or decreasing (potentially incompatible). This mirrors
reader/writer compatibility in schema-evolution systems~\cite{kleppmann}.
\end{definition}

The lattice is defined over \emph{hydrated} payloads, and that word is
doing work: the reference implementation persists step results in a
structured serialization (a flattened value array with numeric
back-references and typed wrappers for \texttt{Date}, \texttt{Map},
\texttt{Set}, and cycles), so field presence is well-defined only after
hydration, and an implementation computing presence frequencies over raw
serialized bytes will be wrong. Typed wrappers sit outside the record
lattice entirely: a demand reaching \emph{into} a \texttt{Map} or
\texttt{Date} internal receives verdict \textsc{unk} and the maximal-%
demand treatment of \S\ref{sec:demand}, since the lattice has nothing
sound to say about their evolution. Both the exact and distributional
regimes of \S\ref{sec:estimation} therefore include a hydration step;
its cost is linear in payload size and is included in the measured
per-run figures of \S\ref{sec:evaluation}.

\begin{definition}[Incompatibility functions]\label{def:rho}
Each changed site is scored by an \emph{incompatibility function}
$\rho: N_1 \to [0,1]$, the probability that a recorded occurrence of the
site is incompatible with $V_2$:
\begin{align}
\rho_\prot(v) &\in \{0,1\}
  && \text{deterministic: membership in } \Delta_\prot, \label{eq:rhoP}\\
\rho_\intf(v) &= \Prob_{x \sim \Phi_v}\!\bigl[\,x \nvDash D_2(\match(v))\,\bigr]
  && \text{payload-distribution mass violating $V_2$'s demand}, \label{eq:rhoI}\\
\rho_\migr(v) &\in \{0,1\}
  && \text{lattice comparison of serialized-state shapes}, \label{eq:rhoM}
\end{align}
where $\Phi_v$ is the empirical distribution of payloads recorded at $v$
across $\mathcal{H} \cup \mathcal{R}$, and $x \nvDash D$ denotes structural
violation of contract $D$.
\end{definition}

Equation~\eqref{eq:rhoI} is the pivotal substitution that removes the need
for dry-run execution. Whether $V_2$ can consume $V_1$'s recorded outputs
is not a property of code paths that must be exercised; it is a
\emph{measure-theoretic} property of the observed payload distribution
against a static contract. If a new consumption site destructures field
$x$, then $\rho_\intf = 1 - \mathrm{freq}_v(x)$; if it narrows a numeric
domain, $\rho_\intf$ is the observed mass outside that domain. Two regimes
are available:

\begin{itemize}[leftmargin=2em]
  \item \textbf{Distributional} (aggregate): $\Phi_v$ summarized as
  field-presence frequencies and value-domain histograms; $\rho_\intf$ is a
  fleet-level rate. Cheap; supports pre-deployment scoring.
  \item \textbf{Exact} (per-run): the actual payload $x_{r,v}$ is validated
  against $D_2$, collapsing $\rho_\intf$ to an indicator per run. Still
  purely static---data validated against a schema, no code executed.
\end{itemize}

\subsection{Demand Inference and the Differential-Demand Lemma}\label{sec:demand}

The consumption contract $D_2(v)$ is obtained from static types where the
program supplies them and, in untyped code, by \emph{access-path abstract
interpretation} of the consumer slice of $v$: the set of member accesses,
destructuring patterns, comparison and arithmetic domains, and
callee-signature flows reachable from the binding of $v$'s result.
Contract verdicts are three-valued,
$x \vDash D \in \{\textsc{sat}, \textsc{vio}, \textsc{unk}\}$: a demand
the analysis cannot resolve (computed property access, reflective use)
widens \emph{only the affected access path} to \textsc{unk}. An
\textsc{unk} verdict never enters Equation~\eqref{eq:rb} as a certainty in
either direction; it contributes a weak-prior Beta term for $\rho_\intf$
with low confidence, which under the pooling rule of \S\ref{sec:pooling}
routes the affected runs toward \textsc{review} rather than silently
toward \textsc{migrate} or \textsc{pin}.

Crucially, the inference burden is bounded by the diff, not the program:

\begin{lemma}[Differential demand]\label{lem:diffdemand}
If the consumer slice of site $v$ is unchanged under $\match$
(syntactically identical after normalization, with unchanged reachable
callees), then setting $\rho_\intf(v) = 0$ preserves the soundness of
Theorem~\ref{thm:sound}.
\end{lemma}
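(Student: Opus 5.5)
The plan is a simulation argument. Theorem~\ref{thm:sound} (stated later) certifies that $\Rb(r)=0$ under conservative matching implies replay of $L_r$ under $V_2$ succeeds in the sense of Definition~\ref{def:replay}. Setting $\rho_\intf(v)=0$ only removes interface terms from $\Rb$, so it suffices to show one thing. If every other contribution to $\Rb(r)$ vanishes, and the consumer slice of $v$ is unchanged under $\match$, then consuming the recorded payloads $x_{r,v}$ under $V_2$ raises no runtime fault. Concretely, I will show that the removed term could only flag faults that $V_1$ itself would already have exhibited. That cannot happen on a recorded prefix, because the log is the result of $V_1$ executing successfully.

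First, fix a run $r$ with $\Rb(r)=0$ after the substitution. The protocol and migration terms are then zero, so $\pref_r \sqsubseteq_\match G_2$ modulo $\equiv_I$ (Definition~\ref{def:trace}). Replay under $V_2$ therefore consumes the same events with the same (ordinal, name) identities as under $V_1$, and for matched step sites it substitutes exactly the recorded payloads.

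Second, I will induct along the recorded prefix (any linearization in the trace class, using Assumption~\ref{ass:det} to make the post-join state independent of completion order). The inductive claim is that the $V_2$ orchestration state restricted to the consumer slice of each already-consumed site $v$ coincides with the corresponding $V_1$ state. The hypothesis supplies this: the slice is syntactically identical after normalization and its reachable callees are unchanged. The slice's inputs are the binding of $v$'s result, which is the identical recorded payload, together with values flowing from earlier matched sites, which agree by induction (including $\rho_\intf=0$ already established or independently certified at those sites). Determinism (Assumption~\ref{ass:det}) then makes the slice's evaluation identical. In particular, every member access, destructuring, and callee flow in the slice behaves as in $V_1$ on $x_{r,v}$. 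During recording, $V_1$ evaluated exactly this slice on $x_{r,v}$ without fault, since the log continued past $v$ or the run is suspended at its frontier. Hence $V_2$ does too, and $x_{r,v}$ satisfies $D_2(\match(v))$ in the only sense Theorem~\ref{thm:sound} uses.

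The main obstacle is the boundary of the slice. Code outside the consumer slice of $v$ may have changed and may feed the slice, and parts of the slice evaluated by $V_1$ might lie beyond the recorded prefix, i.e.\ not yet executed at the frontier. For the first issue, I will define the slice as closed under data-flow ancestry (both the binding and all values it combines with), so that any changed upstream value is a change to the slice itself or to another site's slice, where the induction applies. For the second, I will note that Theorem~\ref{thm:sound} concerns only rehydration up to the frontier; post-frontier consumption is forward risk, which the theorem does not certify. The remaining demands are therefore exactly those $V_1$ exercised, and the argument closes. If slice closure cannot be guaranteed syntactically, the fallback is to treat any unresolved inflow as \textsc{unk}, which keeps $\rho_\intf$ nonzero and the lemma vacuous there.
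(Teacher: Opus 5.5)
Your proposal is correct and follows the paper's argument. Identical normalized code applied to the identical recorded payload behaves identically by Assumption~\ref{ass:det}, and the run's recorded progress past $v$ witnesses that $V_1$'s consumption did not fault, so no contract $D_2$ needs to be consulted; your extra boundary handling (ancestry-closed slice, frontier cut-off) goes beyond the paper's sketch, which simply asserts identical inputs, supplied inside Theorem~\ref{thm:sound} by its global state-coincidence invariant, but it does not change the route.
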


\begin{proof}[Proof sketch]
Replay under $V_2$ presents the recorded payload $x_{r,v}$ to code
identical to the code that consumed it under $V_1$. History witnesses that
this consumption did not fault---the run progressed past $v$. By
Assumption~\ref{ass:det}, the unchanged slice's behavior on identical
inputs is identical, so consumption under $V_2$ cannot fault either. No
contract is validated: certification is by determinism and the historical
witness, bypassing $D_2$ entirely.
\end{proof}

Contract inference (and Assumption~\ref{ass:contract}) is therefore
required only at \emph{changed} consumer slices---typically a handful of
access paths per release---rather than across the codebase. Historical
success is itself compatibility evidence wherever code did not change; the
analyzer spends its precision budget exactly where the diff spends its
risk. Per-site incompatibilities across axes are pooled into a single
$\rho(v)$ by the confidence-weighted operator of \S\ref{sec:pooling}.

\paragraph{Demand satisfiability and the invocation gate.}
One class of change escapes both the differential-demand lemma and
per-site contract validation: the consumer slice is untouched, but its
\emph{producer} is deleted or moved after the consumer, so the demand
edge is no longer satisfiable in $V_2$'s structure at all. The analysis
therefore includes a static satisfiability check over $V_2$: for every
demand edge $(c, u)$, the producer $u$ must lie in the
certainly-preceding scope of $c$ (preceding at top level, in the same
parallel branch, or in the same conditional arm, together with everything
preceding the enclosing construct). An unsatisfiable demand faults
whenever its consumer is invoked, so the per-run question reduces to
whether replay invokes $c$; because the protocol may have changed, that
gate must be evaluated against $V_2$'s structure, not the run's recorded
frontier: replay invokes $c$ once every site certainly preceding $c$ in
$V_2$ that $V_1$ can produce has completed in the prefix. Both the rule
and the gate were forced by the real-workflow corpus of
\S\ref{sec:evaluation}, where producer deletions and demand-order
inversions arise naturally (hook results consumed by later steps) and
where a $V_1$-relative frontier misses consumers that a reordered $V_2$
invokes first.

\paragraph{Dynamic access and inference limits.}
Access-path abstract interpretation reads statically visible member
accesses; JavaScript offers several ways to hide them: computed access
(\texttt{payload[fieldName]}), reflective access
(\texttt{Reflect.get}), deep-path helpers (\texttt{lodash.get}),
re-serialization boundaries (\texttt{JSON.parse} of a stringified
payload), and aliasing through intermediate collections. The design
response is uniform: any consumer slice whose demand cannot be bounded
statically receives verdict \textsc{unk} on the \emph{entire} payload
of each producer it can reach, which the risk model treats as maximal
demand---conservative, never unsound, and it degrades precision only for
the changed slices the differential-demand lemma routes to inference in
the first place. Downstream, an \textsc{unk} verdict is not a silent
worst case: it enters the pooling of \S\ref{sec:pooling} as a
maximal-demand factor with minimal confidence, so its wide Beta
posterior inflates the credible interval, and the interval-based policy
of \S\ref{sec:policy} routes the affected runs to \textsc{review}
rather than to silent migration or blanket pinning---the class whose
diagnosis names the offending consumer slice. The confidence attached to an \textsc{unk} factor is an
operator-tunable prior with two meaningful endpoints: uninformative
(wide posterior, routes to \textsc{review}---the default, since
unknown is not broken) and degenerate-pessimistic (certain violation,
routes to \textsc{pin}); one tested implementation of the analyzer
ships the pessimistic endpoint, which is strictly safer and strictly
more expensive, and both are conformant. Two things bound the
damage in practice: unchanged slices never need inference at all (their
certification is historical, \S\ref{sec:demand}), and \textsc{unk}
verdicts are exactly the places where a one-line developer annotation
(an explicit field list) restores precision; surfacing them in the
per-release verdict explanation makes the annotation a review-time
action rather than archaeology. The evaluation of \S\ref{sec:evaluation} deliberately
evaluates the risk model \emph{given} demands rather than the inference
front-end, which is language- and engine-specific; an empirical study of
inference precision on production JavaScript is future work and is
flagged as such rather than assumed away.

\section{The Risk Model}\label{sec:risk}

Upgrade risk for a run decomposes into a \emph{backward} component---can
the recorded history rehydrate under $V_2$?---and a \emph{forward}
component---will the remainder of the run traverse changed code?
The components differ in kind: backward incompatibility is
\emph{corruption} (state loss, divergence, stranding), forward exposure is
\emph{behavioral} (a fresh execution of intentionally changed code).

\subsection{Backward (rehydration) risk}

\begin{definition}[Backward risk]\label{def:rb}
For run $r$ with prefix $\pref_r = (v_1,\dots,v_k)$ and frontier site
$v_{k+1}$ (the pending invocation recorded by the protocol),
\begin{equation}\label{eq:rb}
\Rb(r) \;=\; 1 \;-\;
\ind{\pref_r \sqsubseteq_\match G_2}\;
\prod_{i=1}^{k+1}\bigl(1-\rho_\intf(v_i)\bigr)
\prod_{i=1}^{k}\bigl(1-\rho_\migr(v_i)\bigr),
\end{equation}
where $\pref_r \sqsubseteq_\match G_2$ (\emph{prefix validity}) holds iff
$(\match(v_1),\dots,\match(v_k))$ is defined and is a path from the entry
of $G_2$ whose guarding predicates are matched-equivalent. The indicator
is read as a hard gate: if the trace-equivalent prefix is not valid under
$\match$, then $\Rb(r) = 1$ (certain replay failure) regardless of the
survival products; when it is valid, the two products compose per-site
survival probabilities over interface and migration factors, so
$\Rb(r) = 0$ exactly when every factor is zero.
\end{definition}

Two boundary conditions matter, and both were surfaced by the synthetic
evaluation of \S\ref{sec:evaluation} before being incorporated here.
First, the interface product ranges over $k{+}1$ sites, not $k$: replay of
a complete prefix immediately \emph{invokes} the frontier site, and
constructing its arguments consumes recorded state, so a demand
introduced at the frontier faults during replay, before any live
execution. Second, the store against which a site's demand is validated
is the \emph{invocation-time} store, not the end-of-prefix store: within
a concurrency group, a producer's completion may appear later in the log
than a sibling consumer's invocation, so a changed consumer that reaches
across branches can fault even though the demanded payload exists
somewhere in the prefix. Both are reconstructible from the log alone,
since the protocol records pending invocations and completion order.

The structure is a survival product: rehydration requires every recorded
event to be consumable, so a single incompatible site anywhere in history
is fatal. Prefix validity captures failure mode (F1); the interface product
captures (F2); the migration product captures serialized-state
deserialization. In the exact regime the interface factors are indicators
and $\Rb(r) \in \{0,1\}$: backward risk is then not an estimate but a
\emph{verdict}. Because $\pref_r$ is fully known
(Assumption~\ref{ass:log}), no probabilistic approximation of history is
ever necessary---runs may be bucketed by prefix hash so the computation is
shared across runs with identical histories.

\subsection{Forward (behavioral) risk}

Forward risk requires a model of where the run will go. We estimate an
absorbing Markov chain on $G_1$'s sites from the complete path histories in
$\mathcal{H}$.

\begin{definition}[Empirical control-flow chain]\label{def:chain}
Let $c(u,w)$ be the number of observed transitions $u \to w$ across all
paths in $\mathcal{H}$. The transition matrix $P$ has rows
$P(u,\cdot) \sim \mathrm{Dirichlet}(\bm{\alpha}_u + \bm{c}_u)$ in the
Bayesian treatment (\S\ref{sec:estimation}); its posterior mean is used for
point estimates. The exit $n_\bot$ is absorbing.
\end{definition}

\begin{definition}[Hitting probabilities]\label{def:hit}
For a changed site $v \in \Delta$, the hitting probability
$h_v: N_1 \to [0,1]$ is the minimal solution of
\begin{equation}\label{eq:hit}
h_v(v) = 1, \qquad h_v(n_\bot) = 0, \qquad
h_v(u) = \sum_{w} P(u,w)\, h_v(w) \quad (u \neq v, n_\bot),
\end{equation}
the standard first-passage system~\cite{norris}. Writing $Q$ for the
transient submatrix, all $h_v$ are obtained from one factorization of
$(I-Q)$.
\end{definition}

\begin{definition}[Forward risk]\label{def:rf}
For a run suspended at frontier $\front_r$,
\begin{equation}\label{eq:rf}
\Rf(r) \;=\; 1 - \prod_{v \in \Delta}
  \bigl(1 - \sev(v)\, h_v(\front_r)\bigr),
\end{equation}
where $\sev(v) \in [0,1]$ is the severity assigned to the change at $v$.
\end{definition}

The noisy-or composition treats changed sites as independent hazard
sources; correlated changes (one logical edit touching several sites)
should be grouped into a single hazard before composition. Severity is
not assigned by judgment but \emph{estimated} from telemetry and
deployment history (\S\ref{sec:severity}); domain knowledge enters only
as priors that data progressively displaces.

\subsection{Aggregate score}

\begin{definition}[Workflow Upgrade Risk]\label{def:wur}
Per run and over the fleet,
\begin{equation}\label{eq:wur}
\WUR(r) \;=\; 1 - \bigl(1-\Rb(r)\bigr)\bigl(1-\alpha\,\Rf(r)\bigr),
\qquad
\WUR \;=\; \sum_{r \in \mathcal{R}} \omega_r\, \WUR(r),
\end{equation}
with $\alpha \in [0,1]$ discounting behavioral relative to corruption risk
and weights $\omega_r$ ($\sum \omega_r = 1$) defaulting to uniform but
admitting business criticality.
\end{definition}

Equivalently, aggregating over the frontier distribution
$\pi(\front) = \sum_r \omega_r \ind{\front_r = \front}$ when per-run
exactness is not needed:
$\WUR = \sum_{\front} \pi(\front)\,[\,1-(1-\bar\Rb(\front))(1-\alpha
\Rf(\front))\,]$, where $\bar\Rb(\front)$ averages backward risk over the
prefixes observed at $\front$.

\section{Estimation from Telemetry}\label{sec:estimation}

All quantities in \S\ref{sec:risk} are estimated from persisted data; this
section makes the uncertainty explicit so that $\WUR$ is reported as a
posterior, not a point.

\subsection{Transition posterior}

With Dirichlet prior $\bm{\alpha}_u$ (default: symmetric, concentration
$1$ over $G_1$-feasible successors) and counts $\bm{c}_u$, the posterior
$P(u,\cdot) \sim \mathrm{Dirichlet}(\bm{\alpha}_u + \bm{c}_u)$ has mean
$\hat P(u,w) = (\alpha_{uw} + c_{uw}) / (\alpha_{u\cdot} + c_{u\cdot})$.
A branch observed a dozen times carries a visibly wide posterior; a branch
never observed reverts to the prior---and is flagged
(\S\ref{sec:policy}).

\subsection{Interface posterior}

For site $v$ with $n_v$ recorded payloads of which $m_v$ violate
$D_2(\match(v))$, $\rho_\intf(v) \sim \mathrm{Beta}(m_v + a,\, n_v - m_v +
b)$ with a weak prior $(a,b)$. The distributional regime uses this
posterior; the exact regime bypasses it per run.

\subsection{Cross-axis pooling}\label{sec:pooling}

Per-site axis scores $\rho_\prot, \rho_\intf, \rho_\migr$ with confidences
$w_\prot, w_\intf, w_\migr$ (posterior precisions; deterministic axes get
$w = \infty$, dominating) are pooled in log-odds space:
\begin{equation}\label{eq:pool}
\operatorname{logit} \rho(v) \;=\;
\frac{\sum_a w_a \operatorname{logit} \rho_a(v)}{\sum_a w_a},
\end{equation}
with the convention that any axis at $\rho_a = 1$ with certainty forces
$\rho(v) = 1$. Confidence-weighted log-odds pooling has the two properties
the setting demands: independent weak signals compound, and a single
certain signal dominates regardless of how many uncertain signals disagree.

\subsection{Severity Estimation}\label{sec:severity}

Severity decomposes into three estimable factors, pooled by
Equation~\eqref{eq:pool}; each carries a posterior, so severity
uncertainty propagates into the $\WUR$ interval like every other
estimate.

\paragraph{Re-execution tolerance.} Durable engines retry failed steps
automatically, so the event logs already contain \emph{natural
re-execution experiments}: every recorded retry of step $s$ that
subsequently completed, in a run that finished without an operator-visible
incident, is a Bernoulli trial witnessing that an extra execution of $s$
is tolerated. With $t_s$ such benign re-executions and $i_s$ incident-linked
ones, $\kappa(s) \sim \mathrm{Beta}(i_s + a,\, t_s + b)$ estimates the
probability that re-executing $s$ is harmful---an \emph{empirical
idempotency measure} requiring no annotation. Idempotency annotations,
where present, enter as informative priors rather than overrides.

\paragraph{Blast radius.} The static count of downstream sites consuming
$v$'s output (normalized by graph size) weights changes whose effects
propagate; it is exact, carrying no posterior.

\paragraph{Class calibration.} Past releases are natural experiments at
the change-class level. The operator's own deployment history yields
per-class $(\text{incident} \mid \text{exposure})$ records, from which
empirical-Bayes Beta posteriors are estimated with shrinkage toward a
cross-fleet prior. A team that has shipped fifty releases has, in effect,
already labeled a severity training set; a team with none inherits the
prior and sees correspondingly wider $\WUR$ intervals---the cold-start
cost is reported, not hidden.

\subsection{Propagation}

$\WUR$'s posterior is obtained by Monte Carlo: draw
$P^{(s)}, \rho^{(s)}$ from their posteriors, recompute
\eqref{eq:hit}--\eqref{eq:wur}, and report the posterior mean with a
$90\%$ credible interval. Because \eqref{eq:hit} is a sparse linear solve
on a graph of tens of nodes, hundreds of draws cost milliseconds. A wide
interval is itself an actionable output: \emph{insufficient telemetry to
certify this migration}.

\section{Theoretical Properties}\label{sec:properties}

\begin{assumption}[Sound matching]\label{ass:match}
$\match$ equates sites only if their labels are identical and equates
guard predicates only if they are semantically equivalent (guaranteed
conservatively by syntactic normalization: syntactic difference $\Rightarrow$
treated as changed).
\end{assumption}

\begin{assumption}[Sound contracts at changed slices]\label{ass:contract}
At sites whose consumer slice is changed under $\match$, the consumption
contracts $D_2$ over-approximate $V_2$'s actual runtime demands on
substituted payloads: if replaying code faults or misbehaves on payload
$x$, then $x \nvDash D_2$ (\textsc{vio} or \textsc{unk}). Unchanged
consumer slices require no contract by Lemma~\ref{lem:diffdemand}.
\end{assumption}

\begin{theorem}[Certification]\label{thm:sound}
Under Assumptions~\ref{ass:det}--\ref{ass:contract}, if $\Rb(r) = 0$ in the
exact regime, then replay of $L_r$ under $V_2$ succeeds and suspends at the
site $\match(\front_r)$ with the same substituted values as replay under
$V_1$.
\end{theorem}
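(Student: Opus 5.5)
The plan is an induction on the replay of $L_r$ under $V_2$, comparing it step by step with the (historically witnessed) replay under $V_1$. First unpack the hypothesis. In the exact regime every factor in Equation~\eqref{eq:rb} is an indicator, so $\Rb(r)=0$ forces three facts: the gate $\pref_r \sqsubseteq_\match G_2$ holds; $\rho_\intf(v_i)=0$ for $i=1,\dots,k+1$; and $\rho_\migr(v_i)=0$ for $i\le k$. Prefix validity is lifted to traces (Definition~\ref{def:trace}), so it gives some $\pref' \equiv_I \pref_r$ whose image under $\match$ is a valid path of $G_2$ from entry with matched-equivalent guards. The first step replaces the recorded linearization by $\pref'$. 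Assumption~\ref{ass:det} makes post-join state a function of the multiset of branch results, and independent pairs exist only in join-observed groups (Definition~\ref{def:indep}). Hence replaying $L_r$ and replaying its permutation to $\pref'$ yield the same orchestration state at every join and at the end, so it suffices to argue along $\pref'$.

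The core is an invariant proved by induction on $i$. After consuming the first $i$ events, replay under $V_2$ is at a program point corresponding under $\match$ to replay under $V_1$ at the same point. The two replays have drawn the same invocation ordinals, and the substituted-value stores agree on the recorded sites. For the inductive step we argue three things in turn.
\begin{enumerate}
\item Control flow reaches $\match(v_{i+1})$. Guards on the path depend only on previously recorded values, which agree by the invariant, and they are semantically equivalent by Assumption~\ref{ass:match}. So $V_2$ takes the matched edge.
\item The next event matches. $\match$ preserves labels and invocation ordinals in the prefix region (Definition~\ref{def:matching}), so the event carries the identifier of the pending invocation and passes the name check of Definition~\ref{def:replay}. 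No divergence occurs, and the same payload $x_{r,v_{i+1}}$ is substituted.
\item No runtime fault occurs while consuming the substituted payload. We split on the consumer slice. If it is unchanged, Lemma~\ref{lem:diffdemand} applies: history witnesses non-faulting consumption under $V_1$, and determinism transfers this to $V_2$. If it is changed, $\rho_\intf=0$ in the exact regime means $x \vDash D_2$ with verdict \textsc{sat}, not \textsc{unk}. By the contrapositive of Assumption~\ref{ass:contract}, the code neither faults nor misbehaves. Here the demand must be validated against the invocation-time store, as the remark after Definition~\ref{def:rb} stipulates. The migration factors likewise guarantee that the serialized workflow-scope state deserializes.
\end{enumerate}
Assumptions~\ref{ass:log} and~\ref{ass:versions} supply the full log and both programs, so each check is well-defined.

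After $k$ events the log is exhausted. Then $V_2$ invokes the next site, which by the same guard argument is $\match(\front_r)$. The factor $(1-\rho_\intf(v_{k+1}))=1$ ensures that constructing its arguments from recorded state does not fault. Replay therefore suspends there, and the invariant yields the same substituted values as under $V_1$.

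I expect the main obstacle to be the interaction between the trace quotient and the invariant inside concurrent groups. Ordinals are fixed at fork invocation, while completions arrive permuted. So I must check that moving to $\pref'$ never changes which identifier an event meets, since matching is by identifier rather than position. I must also check that a consumer invoked mid-group sees a store that does not depend on the permutation. My first attempt would be to prove a small commutation lemma: swapping adjacent independent completions leaves the set of pending invocations and every non-join-visible store unchanged. This uses the fact that join-observed results are unobservable before the join. The lemma would then be iterated along the permutation from $\pref_r$ to $\pref'$.
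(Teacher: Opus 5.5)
Your sequential core matches the paper's. Guards agree by the invariant and Assumption~\ref{ass:match}. Label and ordinal preservation pass the name check. Faults are excluded by Lemma~\ref{lem:diffdemand} or by Assumption~\ref{ass:contract}, depending on whether the consumer slice changed. The $k{+}1$ factor covers the frontier invocation.

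The gap is the concurrency reduction. Replay consumes $L_r$ in recorded order, never in the order of $\pref'$. The multiset clause of Assumption~\ref{ass:det} only equates post-join states of replays that reach the join without diverging. It does not show that each event of $L_r$ meets a consumer holding its identifier, so ``it suffices to argue along $\pref'$'' does not follow.

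The commutation lemma you propose to close this is false under Definition~\ref{def:replay} once a join-observed branch has two sites. Take branches $(a_1,a_2)$ and $(c_1,c_2)$, with $a_1,c_1$ drawing ordinals $0,1$ at the fork. If the log consumes $a_1$'s completion before $c_1$'s, replay draws $2$ for $a_2$ and $3$ for $c_2$. Transposing those two independent completions makes replay draw $2$ for $c_2$ instead. The recorded event $(2,a_2)$ then meets the consumer $(2,c_2)$ and the name check fails. So the permuted sequence diverges even under $V_1$, and the pending invocations are not invariant under $\equiv_I$.

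There is a second problem. $\Rb(r)=0$ certifies \textsc{sat} against the invocation-time stores \emph{of the recorded order}. A changed $V_2$ consumer may reach across branches, which is exactly the case the remark after Definition~\ref{def:rb} addresses. Along $\pref'$ such a consumer can see a different store, so your step~3 has no certificate behind it there. For the same reason, ``non-join-visible stores unchanged'' can fail for $V_2$ even where it holds for $V_1$.

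The repair is never to permute the log. Induct along $L_r$ itself, comparing $V_2$'s replay with the $V_1$ replay of the same $L_r$ that history witnesses. Both consume completions in the same order. Given matched structure and the same fork order, they therefore draw ordinals in the same order, and the stores they present are exactly those the exact regime validated.

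Use $\pref'$ only structurally. $\equiv_I$ preserves each branch's projection and the fork order. So validity of $\match(\pref')$ in $G_2$ already says that every branch's recorded subsequence maps onto the corresponding branch of $G_2$.

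This is what the paper's induction over the series--parallel decomposition does. It keeps a per-branch invariant and gets identical ordinals from identical fork order and identical consumption order. It then re-establishes the invariant at the join by the multiset clause of Assumption~\ref{ass:det}, so no commutation lemma is needed.
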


\begin{proof}[Proof sketch]
Induction over the series--parallel decomposition of the prefix.
Invariant: after consuming a downward-closed set of events,
$V_2$'s execution state coincides (up to $\match$) with $V_1$'s replay
state on the same events. \emph{Sequential composition:} by the invariant,
$V_2$ evaluates the same guard predicates on the same substituted values
as $V_1$; prefix validity (now modulo $\equiv_I$,
Definition~\ref{def:trace}) plus Assumption~\ref{ass:match} gives
identical branch outcomes, so $V_2$ reaches the matched site of the next
event---no divergence. $\Rb(r)=0$ gives payload compatibility at that
site: by Lemma~\ref{lem:diffdemand} if the consumer slice is unchanged,
and by Assumption~\ref{ass:contract} otherwise; so substitution does not
fault, and by Assumption~\ref{ass:det} the post-substitution state is a
deterministic function of substituted values, re-establishing the
invariant. \emph{Join-observed parallel composition:} the invariant holds
per branch, since branches are mutually unobservable before the join;
matched branches, invoked in the same fork order, regenerate identical
invocation ordinals, so every recorded completion has a registered
consumer regardless of the log's completion order---exactly the
permutations $\equiv_I$ quotients away, and exactly the tolerance the
reference implementation exhibits (Remark~\ref{rem:conformance}). The
post-join state is by Assumption~\ref{ass:det} a function of the multiset
of branch results, which the matched branches reproduce, so the invariant
is re-established at the join.
\emph{Order-observed groups} contribute no independent pairs, so their
events are handled by the sequential case. After all events are consumed,
execution suspends at $\match(\front_r)$.
\end{proof}

\begin{remark}
The converse fails, deliberately: $\Rb(r) = 1$ may be a false positive
induced by conservative syntactic matching (e.g., a semantically neutral
refactor of a guard). The model trades completeness for soundness; the
policy layer (\S\ref{sec:policy}) routes such runs to \emph{pin}, which is
safe.
\end{remark}

\begin{theorem}[Canonical-form decision for concurrent prefixes]\label{thm:canon}
For series--parallel $G_1, G_2$, prefix validity modulo $\equiv_I$ is
decidable in $O(|\pref_r| \log |\pref_r|)$ time.
\end{theorem}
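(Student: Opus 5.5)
The plan is to reduce prefix validity modulo $\equiv_I$ to a single validity check on a \emph{canonical representative} of the trace $[\pref_r]_{\equiv_I}$. Such a representative exists because, by Definition~\ref{def:indep}, independence only ever permutes \emph{completion} events of distinct branches of a join-observed group. The invocation ordinals recorded with every event (Definition~\ref{def:replay}, Remark~\ref{rem:conformance}) give a total order compatible with the program's series--parallel order. We therefore fix as canonical form the linearization obtained by sorting the completion events of each join-observed group by the invocation ordinal of their branch, while leaving every dependent pair in log order. Order-observed groups contribute no independent pairs, so their events keep their recorded positions. Concretely, we stably sort the log by the key (enclosing-group position, ordinal) restricted to independent blocks, which costs $O(L\log L)$ with $L=|\pref_r|$.

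The first step is to show that this is a normal form, i.e.\ $\pref \equiv_I \pref'$ iff the two have identical canonical forms. This is the standard Foata/lexicographic-normal-form argument for Mazurkiewicz traces, specialized to our setting. Two linearizations are equivalent iff they agree on the relative order of every dependent pair. Within a join-observed group only cross-branch pairs are independent, and sorting by branch ordinal is a total order that respects all intra-branch (dependent) orders. Hence the canonical form depends only on the trace. The direction that sorting is reachable by adjacent swaps of independent occurrences follows by bubble-sort reasoning, since each inversion removed is a cross-branch pair.

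The second step is to show that the trace admits \emph{some} valid linearization in $G_2$ iff its canonical form is valid. We argue over the series--parallel decomposition of $G_2$, which matches $G_1$'s decomposition on $\mathrm{dom}(\match)$ because $\match$ preserves fork branch order (Definition~\ref{def:matching}). A linearization of a parallel composition is a path of $G_2$, in the sense of Definition~\ref{def:trace} (embedding into the series--parallel order), iff each branch's projection is a valid path of that branch. That property is invariant under cross-branch permutation, so validity is a trace property and in particular holds for the canonical form iff it holds for any representative. Finally, checking the canonical form amounts to a linear walk: map each site via $\match$ and descend the decomposition tree of $G_2$, keeping a stack of open groups and checking each branch's projection against its branch subgraph and guard equivalence. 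This takes $O(L)$ after $O(L)$ hashed lookups of $\match$, since the step graph is fixed and small.

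The main obstacle is the second step's embedding equivalence when the prefix is \emph{incomplete} inside a group: some branches are unfinished, loop unrollings nest groups, and a consumer's invocation can interleave with a sibling's completion. We must ensure the projection-per-branch criterion remains correct for downward-closed partial traces, not just completed groups. We would handle this by proving the embedding lemma by induction on the decomposition tree for prefix-closed ideals of the partial order, treating each open group as a partially traversed node. The $\log L$ factor then comes solely from the sort, matching the claimed $O(|\pref_r|\log|\pref_r|)$ bound.
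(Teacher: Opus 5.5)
Your proposal follows the paper's proof: a normal form that orders join-observed branches by their invocation ordinal, uniqueness of normal forms for traces, one traversal of the canonical form against the $\match$-image region of $G_2$, and the sort dominating the $O(L\log L)$ cost. The paper defines the normal form recursively over the unique series--parallel decomposition tree, so your flat stable-sort key has to be the full root-to-leaf path in that tree once groups nest. Your treatment of incomplete groups as downward-closed ideals matches the paper's observation that the recorded prefix is a downward-closed sub-pomset of $G_1$'s sp-order.
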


\begin{proof}[Proof sketch]
Series--parallel decompositions are unique~\cite{vtl}, so both the
recorded prefix (a downward-closed sub-pomset of $G_1$'s sp-order) and the
candidate region of $G_2$ admit canonical decomposition trees. Define the
normal form of a prefix recursively: at sequential nodes, concatenate
children's normal forms; at join-observed parallel nodes, order the
branches' normal forms by \emph{branch invocation ordinal}---the
canonical key both versions share, since branch order is protocol
identity (Definition~\ref{def:indep}); in the reference implementation
this key is materialized, not reconstructed, since a lexicographic sort
of the recorded correlation identifiers yields it directly
(Remark~\ref{rem:conformance}). Two prefixes are
$\equiv_I$-equivalent iff their normal forms are equal (uniqueness of
normal forms for traces over sp-independence alphabets), and the prefix
embeds in $G_2$ iff its normal form matches the normal form of the
$\match$-image region, checked by one simultaneous traversal. Sorting
completions into branch order dominates the cost.
\end{proof}

\begin{corollary}
$\Rb$ remains exactly computable in the presence of \texttt{Promise.all}
concurrency, at unchanged asymptotic cost: the $O(b \cdot L)$ bound of
Proposition~\ref{prop:complexity} gains only the per-prefix
$O(L \log L)$ normalization.
\end{corollary}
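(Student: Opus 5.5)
The plan is to derive the corollary by combining Theorem~\ref{thm:canon} with the complexity accounting of Proposition~\ref{prop:complexity}, which I take to give an $O(b\cdot L)$ bound for computing $\Rb$ across prefix buckets in the sequential setting ($b$ buckets of distinct prefix hashes, prefix length at most $L$). The argument has two parts: \emph{exactness}, meaning the concurrent case still produces the verdict of Definition~\ref{def:rb} lifted to traces, and \emph{cost}, meaning the only extra work is per-prefix normalization.

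For exactness, I will note that Equation~\eqref{eq:rb} has three ingredients: the prefix-validity gate, the interface product over $k{+}1$ sites, and the migration product over $k$ sites. The gate, lifted to traces by Definition~\ref{def:trace}, is decided exactly by the canonical-form procedure of Theorem~\ref{thm:canon}. The two survival products are indexed by the \emph{multiset} of sites in the prefix, plus the frontier, so they are invariant under $\equiv_I$ permutations. The one subtlety is the invocation-time store used for interface validation. Independence only permutes completions of distinct branches of a join-observed group, which by Assumption~\ref{ass:det} are mutually unobservable before the join. Cross-branch demands are therefore evaluated against the recorded log order, which the log itself supplies, so the factor is still read off $L_r$ and no estimation is introduced. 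Consequently, in the exact regime $\Rb(r)\in\{0,1\}$ remains a verdict, and Theorem~\ref{thm:sound} continues to apply.

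For cost, I will first normalize each bucket's prefix by sorting its completions into branch-invocation-ordinal order. This costs $O(L\log L)$, for example by a lexicographic sort of correlation identifiers (Remark~\ref{rem:conformance}). After normalization, the simultaneous traversal against the $\match$-image of $G_2$ is linear, exactly as in the sequential gate check, and the factor products are the same linear scans. The total is therefore $O(b\cdot L + b\cdot L\log L)$, where the added term is exactly per-prefix normalization and the rest of the bound is unchanged. Bucketing can key on the normal form rather than the raw hash, which can only merge buckets, so $b$ does not grow.

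The step I expect to be the main obstacle is the invariance argument for the invocation-time store. It is not a pure multiset property, so I will have to argue carefully that reordering join-observed completions cannot change which payloads a changed consumer sees at invocation. My first approach is to restrict to the case the paper already isolates, namely consumers inside a branch, which see only their own branch's prefix and preceding scope, and to treat any cross-branch reach as evaluated on the actual recorded order.
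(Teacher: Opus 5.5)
Your proposal is correct and follows the paper's own (implicit) argument: Theorem~\ref{thm:canon} decides the $\equiv_I$-lifted prefix-validity gate in $O(L\log L)$ per prefix, the survival products and the $O(b\cdot L)$ accounting of Proposition~\ref{prop:complexity} carry over unchanged, and exactness follows as in Proposition~\ref{prop:exact}. One caution concerns your aside about keying buckets on the normal form: trace-equivalent prefixes can have different invocation-time stores for a consumer that reaches across branches, which conflicts with your own (correct) rule of reading such demands off the recorded log order. For the bound, either keep raw-prefix buckets, where $b$ is unchanged by definition, or do the order-dependent demand checks per run.
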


\begin{proposition}[Exactness of the backward term]\label{prop:exact}
In the exact regime, $\Rb$ involves no estimation: it is a computable
function of $(L_r, G_1, G_2, \match)$. In particular runs with identical
prefix hashes have identical $\Rb$, and the fleet-level backward risk is
the exact fraction (under $\omega$) of rehydration-incompatible runs.
\end{proposition}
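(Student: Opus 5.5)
The plan is to unfold Equation~\eqref{eq:rb} factor by factor and show that, in the exact regime, each factor is a decidable predicate of $(L_r, G_1, G_2, \match)$ alone. Three factors need treatment: the prefix-validity gate, the $k{+}1$ interface indicators, and the $k$ migration indicators.

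\textbf{The gate.} By Assumption~\ref{ass:log} the log $L_r$ is fully persisted, so the prefix $\pref_r$ and the pending frontier invocation are read off directly. The lexicographic ULID sort of Remark~\ref{rem:conformance} recovers the invocation ordinals. The gate $\ind{\pref_r \sqsubseteq_\match G_2}$ is then lifted to traces by Definition~\ref{def:trace}, and Theorem~\ref{thm:canon} decides it in $O(L\log L)$ from $G_1$, $G_2$ and $\match$. The guard-predicate part of the gate is purely syntactic under Assumption~\ref{ass:match}, so it is also computable.

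\textbf{The interface factors.} In the exact regime $\rho_\intf(v_i)$ is the indicator $\ind{x_{r,v_i} \nvDash D_2(\match(v_i))}$. The payload $x_{r,v_i}$ is taken from $L_r$ after the deterministic hydration step. The demand $D_2$ is a static artifact of $G_2$ and the diff (Assumption~\ref{ass:versions}). The validation store is the invocation-time store, which the log order reconstructs, and the $(k{+}1)$-st factor uses the recorded pending invocation. Two cases need their own argument:
\begin{itemize}
\item For sites whose consumer slice is unchanged, Lemma~\ref{lem:diffdemand} fixes the factor to $0$, which is again a function of the diff.
\item For sites that receive an \textsc{unk} verdict, I must check the convention. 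In the exact regime the operator-fixed endpoint (pessimistic indicator, or a review flag) must be treated as a deterministic rule, not a sample.
\end{itemize}
This \textsc{unk} case is the step I expect to be the main obstacle. The clean statement may need to say that \textsc{unk} sites either lie outside the exact regime or are resolved by a fixed deterministic convention.

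\textbf{The migration factors.} Each $\rho_\migr(v_i)$ is defined in Equation~\eqref{eq:rhoM} as a lattice comparison of serialized shapes, so it is also deterministic.

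With all three parts in hand, $\Rb(r)$ is a composition of computable functions of the stated arguments, and no posterior draw from \S\ref{sec:estimation} enters. For the two consequences:
\begin{itemize}
\item If two runs share a prefix hash, meaning the same trace-normal-form prefix, frontier and payloads, then every input above coincides. So $\Rb$ agrees, with the hash taken to cover payloads as well as sites.
\item Since $\Rb(r)\in\{0,1\}$, the fleet term $\sum_r \omega_r \Rb(r)$ is exactly the $\omega$-mass of runs with $\Rb(r)=1$. By Theorem~\ref{thm:sound} together with the gate semantics, these are the rehydration-incompatible runs, up to the soundness-side false positives noted in the following remark.
\end{itemize}
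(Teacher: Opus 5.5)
Your decomposition of Eq.~\eqref{eq:rb} into the gate, the interface indicators and the migration indicators is the paper's own rationale made explicit. The paper states the proposition without a separate proof; it relies on the remark after Definition~\ref{def:rb} that in the exact regime every factor is an indicator over the fully persisted log. Your computability argument is fine. The \textsc{unk} worry is definitional rather than an obstacle. The exact regime is \emph{defined} as the one in which $\rho_\intf$ collapses to a per-run indicator, so an \textsc{unk} site either takes the degenerate-pessimistic endpoint (the constant $1$) or puts the run outside the regime. Your reading of the fleet claim is also right: Theorem~\ref{thm:sound} gives only the inclusion of truly incompatible runs in $\{\Rb=1\}$, so the count is exact only as a count of flagged runs. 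You are also right that the hash must cover payloads, at least the demanded-field presence bits that the paper's implementation note batches as a bitmask.

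The step that fails is ``every input above coincides'' for the key you chose: trace-normal-form prefix, frontier and payloads. One of your inputs is the invocation-time store, which you correctly say is reconstructed from log \emph{order}. But $\equiv_I$ forgets exactly that order.

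A counterexample: put a producer $P$ and a consumer $C$ on sibling branches of a join-observed group of $V_1$. Let $V_2$ change $C$'s consumer slice so that it also reads $P$'s result; this is the cross-branch consumer the paper discusses right after Definition~\ref{def:rb}. Take two runs with identical payloads whose logs differ only in where $P$'s completion falls relative to $C$'s invocation. Since $P$ and $C$ are independent, the two logs are $\equiv_I$-equivalent and share your key. Yet $P$'s payload is in $C$'s invocation-time store for one run and absent for the other, so the interface indicator at $C$, and hence $\Rb$, differs.

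The repair follows directly from your first part. Since $\Rb$ is a function of $L_r$, hash the recorded log itself, order included (the paper's ``identical histories''), and the consequence is a one-liner. If you want verdict sharing across trace-equivalent logs, the key must also record, for each recorded invocation, which completions precede it in the log. Alternatively, restrict that sharing to diffs that introduce no cross-branch demands.
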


\begin{proposition}[Monotonicity]\label{prop:mono}
$\WUR$ is non-decreasing in each $\rho(v)$, each $\sev(v)$, each
$h_v(\front)$, and under enlargement of $\Delta$. Consequently splitting a
release into smaller change sets never increases, and generically
decreases, per-release risk---recovering, as a theorem of the model, the
operational folklore that small frequent deploys are safer.
\end{proposition}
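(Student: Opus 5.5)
The plan is to prove monotonicity by composition: the map from the primitive quantities to $\WUR$ is a nesting of coordinatewise monotone functions on $[0,1]$, so it is enough to check each layer. Write $g(a,b)=1-(1-a)(1-b)$ for the noisy-or on $[0,1]^2$. Its partial derivatives are $\partial_a g = 1-b \ge 0$ and $\partial_b g = 1-a\ge 0$, so $g$ is non-decreasing in each argument. The same holds for the $n$-ary product form $1-\prod_i(1-x_i)$, and appending a factor $x_{n+1}\in[0,1]$ can only raise it, since $\prod_{i\le n+1}(1-x_i)\le\prod_{i\le n}(1-x_i)$.

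\emph{Step 1 (backward term).} In Equation~\eqref{eq:rb}, $\Rb(r)=1-\mathbb{1}\cdot\prod(1-\rho_\intf)\prod(1-\rho_\migr)$. This is non-decreasing in every per-site $\rho$. The pooling operator~\eqref{eq:pool} is a positive-weighted mean of monotone $\operatorname{logit}$'s, so the pooled $\rho(v)$ is non-decreasing in each axis score. The certainty convention also only raises $\rho$.

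Enlarging $\Delta$ enters $\Rb$ in two ways. It adds factors: a new changed site turns some $\rho$ from $0$ to a positive value. It can also flip the gate: adding to $\Delta_\prot$ can only remove pairs from $\match$ or mark more predicates as changed, so $\pref_r\sqsubseteq_\match G_2$ can only go from true to false. Both effects raise $\Rb$. I will state explicitly that ``enlargement of $\Delta$'' means the change set grows while $\match$ shrinks (or stays fixed) on the retained sites. This is the reading under which the gate is antitone.

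\emph{Step 2 (forward term).} In Equation~\eqref{eq:rf}, each factor $1-\sev(v)h_v(\front)$ lies in $[0,1]$ and is decreasing in both $\sev(v)$ and $h_v(\front)$. Hence $\Rf$ is non-decreasing in each of them, and in $\Delta$ by the appended-factor observation. Then $\WUR(r)=g(\Rb,\alpha\Rf)$ with $\alpha\ge 0$, and the fleet score is a nonnegative combination with weights $\omega_r$. So every monotonicity passes through to $\WUR$.

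\emph{Step 3 (splitting corollary), the main obstacle.} The claim is not literally that a sum of risks is smaller. It says that the per-release risk of each piece $\Delta_j\subseteq\Delta$ is at most the risk of the whole, which is immediate from $\Delta$-monotonicity. The delicate point is ``generically decreases''. I will show strict inequality whenever the complement $\Delta\setminus\Delta_j$ contains a site $v$ whose contribution is nonzero in the noisy-or: either $\sev(v)h_v(\front_r)>0$ for some run with $\omega_r>0$, or $v$ is backward-relevant to some run. The other factors must also not already saturate the product at $1$. If $\prod(1-x_i)>0$ and $x_{n+1}>0$, the product drops strictly.

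The care needed is in stating the genericity condition: no run is already certain to fail, and the removed hazards are reachable with positive severity. It is also worth noting in the proof that the corollary compares per-release scores. It makes no claim about the cumulative risk of the sequence of releases, which the model does not define.
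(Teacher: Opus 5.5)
Your proof is correct and takes the route the paper leaves implicit: the paper states Proposition~\ref{prop:mono} without proof, relying on the coordinatewise monotonicity of the noisy-or compositions in Eqs.~\eqref{eq:rb}, \eqref{eq:rf}, and \eqref{eq:wur} that you verify, and you handle the prefix-validity gate and the per-release (not cumulative) reading of the corollary more carefully than the paper does. For the strictness clause, add $\alpha > 0$ to your genericity conditions when the removed hazard is forward-only, since $\WUR(r)$ does not depend on $\Rf(r)$ at $\alpha = 0$.
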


\begin{proposition}[Complexity]\label{prop:complexity}
With $n = |N_1|$, $m$ in-flight runs bucketed into $b \le m$ distinct
prefixes of maximum length $L$, and $d = |\Delta|$: graph extraction and
diff are linear in program size with an $O(n^2)$ matching resolution;
backward risk costs $O(b \cdot L)$ contract validations; forward risk
costs one $O(n^3)$ worst-case (sparse, in practice near-linear)
factorization of $I - Q$ plus $O(d\,n)$ back-substitutions; aggregation is
$O(m + d\,b)$. Monte Carlo propagation multiplies the forward and
aggregation terms by the draw count $S$.
\end{proposition}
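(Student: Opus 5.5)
The proof is a cost accounting, taken one pipeline stage at a time; the pieces are then summed.

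\emph{Extraction and diff.} Graph extraction is a single pass of the sandboxed orchestration AST, so it is linear in program size. Each site is labelled on the fly. The diff hashes labels and fingerprints, which is also linear. Matching resolution is order-preserving pairing of the same-labelled candidates along paths of $G_1$ and $G_2$. I would cast it as an LCS-style alignment over at most $n$ sites per side, giving the $O(n^2)$ term. Classifying into $\Delta_\prot,\Delta_\intf,\Delta_\migr$ then takes one scan of the matched pairs.

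\emph{Backward risk.} Runs are bucketed by prefix hash at $O(m)$ total cost. This is justified by Proposition~\ref{prop:exact}: identical prefixes give identical $\Rb$. For each of the $b$ buckets, I would evaluate Equation~\eqref{eq:rb} in three steps.
\begin{itemize}
\item Prefix validity is checked by walking $\match(v_1),\dots,\match(v_k)$ in $G_2$. Each step is a constant-time edge and guard lookup on precomputed tables, so it costs $O(L)$.
\item The interface product ranges over $k+1\le L+1$ factors.
\item The migration product ranges over $k$ factors.
\end{itemize}
Each factor is either one contract validation or, via Lemma~\ref{lem:diffdemand}, a constant-time zero. The total is $O(b\cdot L)$ validations. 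The invocation-time store and the frontier factor are reconstructed in the same pass. In the concurrent case the corollary to Theorem~\ref{thm:canon} adds only the $O(L\log L)$ normalization per bucket.

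\emph{Forward risk and aggregation.} By Definition~\ref{def:hit}, every $h_v$ shares the matrix $I-Q$. The only change per target $v$ is that the row of $v$ becomes a boundary condition. I would therefore factor $I-Q$ once, in $O(n^3)$ dense time and near-linear time under sparse elimination on the series--parallel graph. Each $h_v$ then costs $O(n)$ per back-substitution, so the $d$ targets cost $O(dn)$ in total. The step I expect to need the most care is the modified system for $h_v$, which makes $v$ absorbing and so alters $Q$ per target. To recover a single factorization, I would first try the identity $h_v(u)=G(u,v)/G(v,v)$, where $G=(I-Q)^{-1}$ is the Green's function. This needs only column $v$ of $G$, which is one solve per $v$.

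Aggregation evaluates $\Rf$ at each of the at most $b$ distinct frontiers. Each evaluation takes $d$ factors, giving $O(db)$. Combining with $\Rb$ and the weighted sum over runs adds $O(m)$, for $O(m+db)$ overall.

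Under Monte Carlo, each draw refactors $I-Q$ and recomputes the forward and aggregation quantities. $\Rb$ is deterministic in the exact regime and is computed once. This multiplies only those two terms by $S$.
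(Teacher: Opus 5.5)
The paper gives no proof of Proposition~\ref{prop:complexity}. It is asserted as a cost accounting, supported only by the remark after Definition~\ref{def:hit} that all $h_v$ come from one factorization of $(I-Q)$, and by prefix-hash sharing (Proposition~\ref{prop:exact} and the implementation notes). Your stage-by-stage accounting is the argument the paper leaves implicit.

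On the forward term you are more careful than the paper. Equation~\eqref{eq:hit} overwrites row $v$ with a boundary condition, so ``one factorization'' really does need your identity $h_v(u)=G(u,v)/G(v,v)$. You can state it outright rather than as something to try:
\begin{itemize}
\item $(I-Q)Ge_v=e_v$ says column $v$ of $G$ satisfies the homogeneous equations of \eqref{eq:hit} at every $u\neq v$, with value $0$ at $n_\bot$.
\item $G(v,v)=\sum_{k\ge 0}Q^k(v,v)\ge 1$, so dividing by it is safe and gives value $1$ at $v$.
\item Invertibility of $I-Q$ makes the modified system uniquely solvable, so this rescaled column is $h_v$.
\end{itemize}

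The step that fails as written is the per-target cost. You are accounting for the $O(n^3)$ worst case, which means dense LU factors. With dense factors, one forward-plus-back substitution costs $O(n^2)$, not $O(n)$, so $d$ column solves cost $O(dn^2)$. There are two ways to recover the stated $O(dn)$:
\begin{itemize}
\item Form $G=(I-Q)^{-1}$ explicitly. This costs another $O(n^3)$, absorbed by the factorization term. Each $h_v=G(\cdot,v)/G(v,v)$ is then read off in $O(n)$, and aggregation can read $h_v(\front)$ in $O(1)$.
\item Restrict the $O(n)$-per-solve claim to the sparse regime, where the factors have $O(n)$ nonzeros. An acyclic step graph is an example: there $I-Q$ is triangular in topological order.
\end{itemize}
When $d=O(n)$ the total forward cost is $O(n^3)$ either way, but the itemization you wrote needs one of these fixes.

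There are also two smaller points on the backward term.
\begin{itemize}
\item In the exact regime the interface factors are indicators on each run's own recorded payloads. So ``identical prefixes give identical $\Rb$'' holds only if the bucket key includes payload content. If you bucket on the site sequence alone, as the implementation notes effectively do, the per-run payload checks remain and must be charged separately. They reduce to a bitmask over demanded fields, but they are $O(mL)$, not $O(b\cdot L)$.
\item Hashing $m$ prefixes from scratch costs $O(mL)$, not $O(m)$, unless hashes are maintained incrementally as logs grow. This is harmless only because the proposition takes the bucketing as given.
\end{itemize}
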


Workflow step graphs are small ($n$ in the tens); the binding cost is
payload validation, which is embarrassingly parallel and incremental
(memoizable by prefix hash across candidate releases).

\section{Algorithm}\label{sec:algorithm}

Figure~\ref{fig:pipeline} summarizes the functional design: a static lane
that never executes anything, a telemetry lane that never inspects
source, and a per-run join whose outputs feed the fleet-level coupling
stage. Algorithm~\ref{alg:wur} gives the same structure operationally.

\begin{figure}[t]
\centering
\begin{tikzpicture}[node distance=4.5mm and 8mm]
\node[stage, minimum width=52mm] (src) {$V_1,\,V_2$ source};
\node[stage, right=of src, minimum width=62mm] (world)
  {protocol telemetry: \texttt{runs.list}, \texttt{events.list},\\
   retry events, deployment history\ \ (\S\ref{sec:estimation})};
\node[stage, below=of src, minimum width=52mm] (klass)
  {matching $\match$ and canonical diff (\S\ref{sec:model});\\
   change classes $\Delta_\prot, \Delta_\intf, \Delta_\migr$;\\
   demand satisfiability (\S\ref{sec:changes},\,\ref{sec:demand})};
\node[stage, below=of world, minimum width=62mm] (dyn)
  {Markov chain, hitting probabilities, severity $\sigma$;\\
   per-run prefix, frontier, invocation-time stores};
\node[stage, below=6mm of $(klass.south)!0.5!(dyn.south)$,
      minimum width=64mm] (risk)
  {per run $r$: backward $\Rb(r)$ (exact or distributional),\\
   forward $\Rf(r)$ (hitting $\times$ severity);
   pooled $\WUR_r$ posterior\ \ (\S\ref{sec:risk})};
\node[stage, below=6mm of risk, xshift=-40mm, minimum width=38mm] (coup)
  {coupling graph;\\ contagion fixpoint (\S\ref{sec:coupling})};
\node[stage, right=of coup, minimum width=22mm] (cut)
  {min-cut\\ partition};
\node[stage, right=of cut, minimum width=30mm] (out)
  {\textsc{migrate} / \textsc{review} / \textsc{pin}\\ per run};
\draw[flow] (src) -- (klass);
\draw[flow] (world) -- (dyn);
\draw[flow] (klass.south) -- ($(risk.north)+(-25mm,0)$);
\draw[flow] (dyn.south) -- ($(risk.north)+(25mm,0)$);
\draw[flow] (risk.south) -- ($(risk.south)+(0,-3mm)$) -| (coup.north);
\draw[flow] (coup) -- (cut);
\draw[flow] (cut) -- (out);
\node[lab, anchor=south west] at ($(src.north west)+(0,0.8mm)$)
  {static analysis (no execution)};
\node[lab, anchor=south east] at ($(world.north east)+(0,0.8mm)$)
  {telemetry (no source inspection)};
\end{tikzpicture}
\caption{Functional design. The static lane classifies the diff and
checks demand satisfiability without executing anything; the telemetry
lane reconstructs per-run state and estimates dynamics from the protocol
surface without inspecting source. They meet only in the per-run risk
terms, whose posteriors drive the fleet-level contagion and partition
stages. No stage replays a log against $V_2$.}
\label{fig:pipeline}
\end{figure}

\begin{algorithm}[t]
\caption{$\WUR$: telemetry-driven upgrade risk}\label{alg:wur}
\begin{algorithmic}[1]
\Require versions $V_1, V_2$; world handle $\mathcal{W}$; weights
$\omega$; discount $\alpha$; draws $S$
\State $G_1, G_2 \gets \textsc{ExtractStepGraph}(V_1), \textsc{ExtractStepGraph}(V_2)$
\State $\match \gets \textsc{Match}(G_1, G_2)$;\quad
$\Delta \gets \textsc{Classify}(G_1, G_2, \match)$
  \Comment{axes $\prot,\intf,\migr$}
\State $\mathcal{H}, \mathcal{R} \gets \mathcal{W}.\textsc{runs}$;\quad
$\{L_r\} \gets \mathcal{W}.\textsc{events}$
\State $\bm{c} \gets \textsc{TransitionCounts}(\mathcal{H})$;\quad
$\Phi \gets \textsc{PayloadStats}(\mathcal{H} \cup \mathcal{R})$
\For{$v \in \Delta$}
  \State compute $\rho_\prot, \rho_\migr$ statically; posterior of
  $\rho_\intf$ from $\Phi_v$ vs.\ $D_2(\match(v))$ \Comment{Eqs.~\eqref{eq:rhoP}--\eqref{eq:rhoM}}
  \State $\rho(v) \gets \textsc{PoolLogOdds}(\rho_\prot, \rho_\intf, \rho_\migr)$ \Comment{Eq.~\eqref{eq:pool}}
\EndFor
\For{each distinct prefix bucket $\pref$ in $\mathcal{R}$}
  \State $\Rb(\pref) \gets$ Eq.~\eqref{eq:rb}
  \Comment{exact; validate stored payloads against $D_2$}
\EndFor
\For{$s = 1..S$} \Comment{posterior propagation}
  \State draw $P^{(s)} \sim \mathrm{Dirichlet}(\bm{\alpha} + \bm{c})$,
  $\rho^{(s)}, \tau^{(s)}, \lambda^{(s)}$ from their posteriors
  \State solve Eq.~\eqref{eq:hit} once for $(I - Q^{(s)})$; read off
  $h_v^{(s)}(\cdot)$ for all $v \in \Delta$
  \State $\WUR^{(s)} \gets$ Eqs.~\eqref{eq:rf}--\eqref{eq:wur}
  \State $K \gets \textsc{CouplingGraph}(\mathcal{H}, \mathcal{R})$;
  \ $q^{*(s)} \gets \lim_k F^k(\bm{0})$ \Comment{Eq.~\eqref{eq:fix}}
  \State $a^{(s)} \gets \textsc{MinCut}\bigl(\psi^{(s)}, \pi^{(s)}\bigr)$
  \Comment{Thm.~\ref{thm:cut}; contract $\pi{=}\infty$ edges first}
\EndFor
\State \Return posterior mean and $90\%$ interval of $\WUR$
(coupled: $\sum_r \omega_r q_r^{*}$); stable-core partition with
label-instability \textsc{review} set; per-change annotations
\end{algorithmic}
\end{algorithm}

Algorithm~\ref{alg:wur} touches the world only through the protocol surface
(run and event listings); it executes no workflow code and requires no
sandbox. Its outputs are (i) the fleet score with credible interval,
(ii) a per-run verdict vector, and (iii) per-change annotations---which
edit induces which risk, at which sites, affecting how many runs---the
explanation artifact an operator sees at deploy time.

\paragraph{Implementation notes.}
The reference implementation used for \S\ref{sec:evaluation} (a
single-file Python analyzer consuming only run listings and event logs)
realizes Algorithm~\ref{alg:wur} directly and hits the costs of
Proposition~\ref{prop:complexity} without tuning; its measured numbers
appear in \S\ref{sec:evaluation}. One production optimization is worth
recording because it changes the constant dramatically: $\Rb$ depends
only on the diff and the run's (prefix content, frontier) pair, so runs
sharing a frontier and trace-equivalent prefix share a verdict. Hashing
the canonical normal form of each prefix (\S\ref{sec:properties}) and
computing once per equivalence class turns per-run work into per-class
work; production fleets concentrate mass on few frontiers (runs parked
at the same timer or hook), so the class count is typically orders of
magnitude below the run count. The per-run exact payload checks that do
vary within a class (recorded field presence) reduce to a bitmask over
the demanded fields, batched trivially. One scoping rule is
soundness-relevant rather than an optimization: the matching of
Definition~\ref{def:matching} is defined over whole versions, so an
analyzer whose diff covers less than every module contributing
sites---a single-file diff cannot observe module-qualified renames, for
instance---must surface out-of-scope sites as explicit warnings, never
as silent certifications.

\section{Migration Policy}\label{sec:policy}

The verdict partition follows from thresholds
$0 < \tau_{\mathrm{lo}} < \tau_{\mathrm{hi}} < 1$ on per-run scores:

\begin{center}
\begin{tabular}{@{}lll@{}}
\toprule
Condition & Class & Action \\
\midrule
$\Rb(r) = 0 \;\wedge\; \alpha \Rf(r) < \tau_{\mathrm{lo}}$ & \textsc{migrate} & route run to $V_2$ \\
$\Rb(r) = 0 \;\wedge\; \alpha \Rf(r) \in [\tau_{\mathrm{lo}}, \tau_{\mathrm{hi}}]$ & \textsc{review} & rank by $\E[\text{loss}] = \sum_v \sev(v) h_v(\front_r)$ \\
$\Rb(r) > 0 \;\vee\; \alpha \Rf(r) > \tau_{\mathrm{hi}}$ & \textsc{pin} & keep on $V_1$; or synthesize migration \\
\bottomrule
\end{tabular}
\end{center}

Three policy consequences are worth stating.

\paragraph{The migrate action's mechanics deserve the same scrutiny as
its safety.} A \textsc{migrate} verdict presumes a working resume path:
the parked run must actually wake and re-enter execution under the new
deployment. Ground-truth testing of one stack found the wake endpoint
promoting deferred \emph{step}-queue messages while a current
\texttt{sleep} parks the run behind a deferred \emph{workflow}-queue
message, so the vendor-documented ``rerun on latest'' recovery silently
no-ops for sleeping runs. The model certifies that migration is safe;
whether the migration \emph{mechanism} fires is an orthogonal
operational property that should be verified per stack, ideally by the
same counterfactual testing the analyzer itself invites.

\paragraph{Pinning becomes a measured cost, not a default.} The expected
drain time of the pinned population is computable from the same chain
(expected absorption time from each frontier, dominated by residual timer
durations, which the logs record exactly). A release that strands runs
behind a 90-day sleep is now a quantified liability, informing the choice
between pinning and authoring an explicit state-migration function for the
$\Rb > 0$ residue.

\paragraph{Uncertainty routes to review.} Runs whose credible interval
straddles a threshold are classified by the interval's upper bound
(pessimistically) and surfaced with the diagnosis \emph{low-telemetry
branch}: the specific transitions whose posteriors are prior-dominated.
The remedy---observe more runs, or add a synthetic traversal in
staging---is concrete.

\paragraph{Release shaping.} By Proposition~\ref{prop:mono}, $\WUR$ is a
sub-release-additive objective: a planned change set can be greedily split
into a sequence of releases each below a risk budget, yielding an
upgrade plan rather than a single go/no-go.

The thresholds above assume independent runs. When the coupling graph of
\S\ref{sec:coupling} is nonempty, the partition is instead computed
jointly (Theorem~\ref{thm:cut}), and \textsc{review} is defined by label
instability under posterior draws (Remark~\ref{rem:review}).

\subsection{A Worked Example}\label{sec:example}

The whole per-run pipeline fits in one small, fully computed scenario
(numbers produced by the reference implementation). A linear order
workflow \texttt{validate} $\to$ \texttt{charge} $\to$
\texttt{sleep(24h)} $\to$ \texttt{notify} $\to$ \texttt{archive}
ships a $V_2$ in which \texttt{archive} newly consumes
\texttt{charge.promoCode}, a field present in $56.7\%$ of the 120
completed historical runs. Four in-flight runs, one change, three
different verdicts:

\begin{center}\small
\begin{tabular}{lcccl}
\hline
run (state at suspension) & $\Rb$ & $\Rf$ & $\WUR$ & verdict \\
\hline
$r_1$: before \texttt{charge} & $0.00$ & $0.43$ & $0.43$ &
  \textsc{review} \\
$r_2$: charged, \texttt{promoCode} recorded & $0.00$ & $0.00$ & $0.00$ &
  \textsc{migrate} \\
$r_3$: charged, \texttt{promoCode} absent & $0.00$ & $1.00$ & $1.00$ &
  \textsc{pin} \\
$r_4$: complete prefix (promo present) & $0.00$ & $0.00$ & $0.00$ &
  \textsc{migrate} \\
\hline
\end{tabular}
\end{center}

Every mechanism of the model is visible here. Backward risk is zero for
all four: the change adds no demand inside any recorded prefix, and
$r_4$'s completed \texttt{archive} validates against its recorded
payload. The forward term does the discriminating, and it discriminates
\emph{per run from recorded state}: $r_2$ and $r_3$ sit at the
identical frontier under the identical change, but $r_3$'s recorded
\texttt{charge} payload lacks the demanded field, so its future fault
is certain ($\Rf = 1.0$: hitting probability $1$ on a linear workflow
times an exact per-run incompatibility of $1$), while $r_2$'s is
impossible. Only $r_1$, whose \texttt{charge} has not yet executed, is
genuinely probabilistic---its $0.43$ is the telemetry-estimated field
absence rate, and its credible interval (not shown) is what routes it to
\textsc{review} rather than a coin flip. Note where the exact/probabilistic
boundary actually falls: not between backward and forward, but between
recorded and unrecorded state. Only $r_1$, whose demanded payload does
not yet exist, carries \emph{probabilistic} risk; $r_2$--$r_4$ are
decided exactly---$r_4$ by the backward validation of
Definition~\ref{def:rb} and $r_2$, $r_3$ by a forward term whose
incompatibility factor is an indicator over the invocation-time store
(\S\ref{sec:risk}), exact even though the fault it predicts lies in
the future. A rollout that pins $r_3$, migrates $r_2$ and $r_4$, and
holds $r_1$ for review strands exactly one run behind the old
deployment instead of the whole fleet.

\section{Inter-Run Coupling and the Mixed-Version Window}\label{sec:coupling}

This section is the operations layer of the roadmap in \S1: everything
above it produces per-run verdicts under an independence assumption, and
everything in it exists because production fleets violate that
assumption in two specific, telemetry-visible ways. Readers evaluating
only the per-run model can skip it; operators cannot, because a
partition that ignores coupling will split saga partners across versions
and manufacture exactly the mixed-version failures it was meant to
prevent.

One point of the timeline deserves stating plainly, because the phrase
``mixed-version window'' invites a misreading. The window is an
\emph{output} of the decision, not an input to it: the entire analysis
runs \emph{before any $V_2$ execution exists}, on $V_1$ telemetry and
the static diff alone, and the window opens as a consequence of its
verdict---migrated and fresh runs proceed on $V_2$ while the pinned
population drains on $V_1$. No canary phase or A/B observation period is
required to decide. What \emph{does} happen during the window is
re-decision, not first decision: $V_2$ telemetry accrues and sharpens
the posteriors (severity calibration across deployment history,
\S\ref{sec:severity}, consumes it automatically), the \textsc{review}
class resolves as intervals narrow---re-running the analyzer costs
microseconds per run---and pinned runs are re-evaluated when an
annotation, a rename map, or a state migration lands. The window closes
when the pinned population drains, on the schedule the pinning cost of
\S\ref{sec:policy} already prices.

The aggregate of Definition~\ref{def:wur} treats runs as independent.
Two mechanisms violate this. First, \emph{failure contagion}: runs
coupled through shared external effects can fail because a peer failed,
so individually certified runs are not jointly certified. Second---and
specific to this problem---\emph{mixed-version interaction}: any
nontrivial migrate/pin partition creates a window in which coupled runs
execute different code versions against shared channels. The partition
itself manufactures a version-skew surface at the data layer, so a
policy that ignores coupling can be individually optimal and jointly
unsafe. This section models both, reusing the machinery already in
place.

\subsection{The Coupling Graph}

\begin{definition}[Channel]\label{def:channel}
A \emph{channel} is a typed interaction surface between runs,
$c = (\mathrm{kind}, \mathrm{schema}, \mathrm{endpoints})$ with
$\mathrm{kind} \in \{\textsc{signal}, \textsc{hierarchy},
\textsc{resource}\}$: \textsc{signal} channels are hooks and events one
run raises and another awaits; \textsc{hierarchy} channels connect
parent and child runs; \textsc{resource} channels are pieces of external
state touched by steps of both endpoints, at least one as a writer.
\end{definition}

The \emph{coupling graph} $K = (\mathcal{R}, C)$ has runs as nodes and
channel instances as edges. Edges are estimated from three telemetry
tiers of decreasing certainty:

\begin{enumerate}[leftmargin=2em]
  \item \textbf{Protocol edges.} Hook dispatches, awaited events, and
  parent--child starts carry correlation tokens \emph{in the event logs}
  themselves; \textsc{signal} and \textsc{hierarchy} edges are therefore
  deterministic (confidence $1$), and their payloads are fully recorded.
  \item \textbf{Resource fingerprints.} Step instrumentation (trace-span
  attributes identifying tables, queues, endpoints) yields the resources
  each step touches and the access mode; two runs whose fingerprints
  intersect with at least one write induce a \textsc{resource} edge.
  \item \textbf{Key co-occurrence.} Entity identifiers shared across
  runs' recorded payloads (order, account, correlation keys) induce edge
  \emph{posteriors}: a Beta-distributed existence probability driven by
  co-occurrence statistics, entering the model with proportionally low
  confidence.
\end{enumerate}

\subsection{Failure Contagion as a Least Fixpoint}

\begin{definition}[Transmission]\label{def:trans}
For an edge $(r' \!\to\! r)$ over channel $c$, the \emph{transmission
probability} $\tau_{r'r} \in [0,1]$ is the probability that failure of
$r'$ leaves $c$ in a state that fails $r$. It factors as
$\tau_{r'r} = w(r',c)\cdot\kappa(c,r)$---did the failed run write the
channel, and is the channel critical to the reader---with per-kind Beta
priors updated by incident history where available.
\end{definition}

\begin{definition}[Contagion system]\label{def:contagion}
Given per-run endogenous risks $b_r$ (the coupled-free
$\WUR(r)$ of Equation~\eqref{eq:wur} under the chosen action for $r$),
the joint failure probabilities $q \in [0,1]^{\mathcal{R}}$ satisfy
\begin{equation}\label{eq:fix}
q_r \;=\; F_r(q) \;=\; 1 - (1 - b_r)
\prod_{r' \in N(r)} \bigl(1 - \tau_{r'r}\, q_{r'}\bigr).
\end{equation}
\end{definition}

\begin{proposition}[Least fixpoint]\label{prop:fix}
$F$ is monotone on the complete lattice $[0,1]^{\mathcal{R}}$, so a
least fixpoint $q^{*}$ exists~\cite{tarski} and equals
$\lim_k F^{k}(\bm{0})$; on acyclic coupling the iteration terminates in
graph-depth steps. Moreover $q^{*}$ is monotone in $(b, \tau)$, and
$\tau \equiv 0$ recovers the independent aggregate of
Definition~\ref{def:wur}.
\end{proposition}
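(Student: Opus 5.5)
The plan is to verify each claim of Proposition~\ref{prop:fix} directly from the closed form of Equation~\eqref{eq:fix}, using only order-theoretic facts and Kleene iteration.

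\emph{Monotonicity of $F$.} Order $[0,1]^{\mathcal{R}}$ componentwise; this is a complete lattice, with meets and joins given by coordinatewise infima and suprema. Each factor $1-\tau_{r'r}q_{r'}$ lies in $[0,1]$, because $\tau_{r'r}\in[0,1]$, and it is non-increasing in $q_{r'}$. A product of nonnegative factors, each non-increasing, is non-increasing. Multiplying by $(1-b_r)\ge 0$ and subtracting from $1$ then makes $F_r$ non-decreasing in every coordinate, and $F_r(q)\in[0,1]$. So $F$ is a monotone self-map, and Tarski's theorem~\cite{tarski} gives a least fixpoint $q^{*}$.

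\emph{Identification with $\lim_k F^k(\bm 0)$.} First, $\bm 0\le F(\bm 0)$, so by induction the iterates $F^k(\bm 0)$ form a non-decreasing sequence bounded by $\bm 1$. Their limit $\bar q$ therefore exists. Second, $F$ is a polynomial, hence continuous on the compact cube. Passing to the limit in $F^{k+1}(\bm 0)=F(F^k(\bm 0))$ gives $F(\bar q)=\bar q$. Third, for any fixpoint $p$, induction gives $F^k(\bm 0)\le F^k(p)=p$, so $\bar q\le p$. Hence $\bar q=q^{*}$. Continuity is the step that needs explicit mention: Tarski alone gives existence, and the Kleene characterization needs $\omega$-continuity. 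Here this is free from polynomiality, since a monotone continuous map preserves suprema of increasing chains.

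\emph{Termination on acyclic coupling.} Induct on depth, where a source has depth $0$ and otherwise the depth of $r$ is $1+\max$ over the depths of its in-neighbours $r'\in N(r)$. The claim is that $F^k(\bm 0)_r=q^*_r$ for all $k>\mathrm{depth}(r)$. For a source, $F_r\equiv b_r$ is constant, so the claim holds from $k=1$. In the inductive step, $F^{k+1}(\bm 0)_r$ depends only on $F^k(\bm 0)_{r'}$ for in-neighbours $r'$. These coordinates are already frozen at $q^*_{r'}$, so $F^{k+1}(\bm 0)_r$ equals the fixpoint equation evaluated at $q^*$, which is $q^*_r$. The iteration is therefore exact after $\mathrm{depth}(K)+1$ steps.

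\emph{Monotonicity in $(b,\tau)$.} For $(b,\tau)\le(b',\tau')$ componentwise, the same factor argument as in the first step shows $F_{b,\tau}(q)\le F_{b',\tau'}(q)$ pointwise. Combined with monotonicity of each map, induction gives $F^k_{b,\tau}(\bm 0)\le F^k_{b',\tau'}(\bm 0)$ for every $k$. Taking limits yields $q^*(b,\tau)\le q^*(b',\tau')$.

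\emph{Recovery of the independent aggregate.} If $\tau\equiv0$, every product in Equation~\eqref{eq:fix} equals $1$, so $F_r\equiv b_r$ and $q^*=b$. Since $b_r=\WUR(r)$ by Definition~\ref{def:contagion}, the coupled fleet score $\sum_r\omega_r q^*_r$ reduces to the independent aggregate of Definition~\ref{def:wur}.
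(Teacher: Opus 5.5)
Your proof is correct and follows the route the proposition itself indicates. The paper gives no separate proof beyond the Tarski citation and the Kleene-limit characterization stated inside the proposition, and you supply what that leaves implicit: continuity (here from polynomiality) for the identification $q^{*}=\lim_k F^{k}(\bm{0})$, which Tarski alone does not give, the depth induction for the acyclic case, and the iterate-by-iterate comparison for monotonicity in $(b,\tau)$.
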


Taking the \emph{least} fixpoint is a semantic commitment: every failure
must be grounded in a chain that bottoms out in endogenous risk.
Self-supporting failure cycles---$r$ fails because $r'$ does, and vice
versa, with no root cause---are excluded, in the spirit of well-founded
derivations. The fleet score becomes
$\WUR = \sum_r \omega_r\, q^{*}_r$, and by monotonicity coupling can
only increase it: independence was the optimistic bound.

\subsection{Mixed-Version Channel Risk}

A channel is an \emph{interface} whose writer and reader may sit on
different versions during the window. The machinery of
\S\ref{sec:changes} applies across the channel unchanged: with writer on
version $u$ and reader on version $w$,
\begin{equation}
\rho_{\times}(c;\, u\!\to\! w) \;=\;
\Prob_{x \sim \Phi_c}\!\bigl[\,x \nvDash D_w(c)\,\bigr],
\end{equation}
where $\Phi_c$ is the recorded payload distribution of the channel
(\textsc{signal} payloads are fully logged; \textsc{resource} shapes come
from fingerprint samples), and $D_w$ is the reader-side demand.
Lemma~\ref{lem:diffdemand} transfers verbatim: channels whose reader and
writer slices are both unchanged carry $\rho_{\times} = 0$ with no
inference at all.

Timestamps make the window quantitative. The channel's event rate
$\lambda_c$ carries a Gamma posterior from inter-event times in
$\mathcal{H}$; the expected overlap duration $T(a)$ under partition $a$
is the drain time of the pinned side (absorption times plus residual
timer durations, which the logs record exactly). Under Poisson arrivals,
the \emph{mixed-window penalty}
\begin{equation}\label{eq:penalty}
\pi_c(a) \;=\; 1 - \exp\bigl(-\lambda_c\, T(a)\, \rho_{\times}(c)\bigr)
\end{equation}
is the probability of at least one incompatible cross-version
interaction on $c$ during the window.

\subsection{The Coupling-Aware Partition Is a Minimum Cut}

The migrate/pin decision is now a joint labeling
$a \in \{V_1, V_2\}^{\mathcal{R}}$ minimizing
\begin{equation}\label{eq:energy}
E(a) \;=\; \sum_{r} \psi_r(a_r) \;+\;
\sum_{c = (r, r') \in C} \pi_c \,\ind{a_r \neq a_{r'}},
\end{equation}
where $\psi_r(V_2)$ is the run's migration risk (with
$\Rb(r) > 0 \Rightarrow \psi_r(V_2) = \infty$: hard pin) and
$\psi_r(V_1)$ is the \emph{pinning cost}: stranding liability (expected
drain time) plus forgone-fix exposure $\beta_r$---the probability that
the run reaches the defect $V_2$ repairs, which is itself a hitting
probability $h_{\mathrm{defect}}(\front_r)$ from the same linear solve
as \S\ref{sec:risk}. The reason to upgrade and the risk of upgrading are
computed by the same operator.

\begin{theorem}[Exact partition]\label{thm:cut}
With $\pi_c \geq 0$ penalizing only label disagreement, the pairwise
terms of \eqref{eq:energy} are submodular for binary labels, and the
global minimum of $E$ is computed exactly in polynomial time by a single
$s$--$t$ minimum cut~\cite{greig,kz}. Must-move-together constraints
($\pi_c = \infty$, e.g.\ saga partners over deterministic channels) are
handled by contracting their endpoints into super-nodes beforehand.
\end{theorem}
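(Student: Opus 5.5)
The plan is the classical Greig--Porteous--Seheult / Kolmogorov--Zabih reduction. We encode $a_r = V_1$ as ``$r$ on the source side $S$'' and $a_r = V_2$ as ``$r$ on the sink side $T$''. Because the energy \eqref{eq:energy} is a first-order term plus a pairwise Potts term, we build a graph whose $s$--$t$ cuts are in cost-preserving bijection with labelings. Minimizing $E$ then reduces to a single max-flow computation.

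\textbf{Submodularity.} First we check the Kolmogorov--Zabih condition $E_c(0,0)+E_c(1,1)\le E_c(0,1)+E_c(1,0)$ for each pairwise term $E_c(a_r,a_{r'}) = \pi_c\,\ind{a_r\ne a_{r'}}$. Here the left side is $0$ and the right side is $2\pi_c\ge 0$, so every pairwise term is submodular. Nonnegativity of $\pi_c$ is immediate from \eqref{eq:penalty}: it is $1-\exp(-x)$ with $x=\lambda_c T(a)\rho_\times\ge 0$.

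\textbf{Graph construction.} For each run $r$, add an edge $s\to r$ of capacity $\psi_r(V_2)$, cut exactly when $r\in T$, and an edge $r\to t$ of capacity $\psi_r(V_1)$, cut exactly when $r\in S$. Any constant shift needed to make the unaries nonnegative changes neither the argmin nor the cut structure; since both unaries are risks or costs, they are already nonnegative. For each channel $c=(r,r')$, add the two edges $r\to r'$ and $r'\to r$, each of capacity $\pi_c$. Exactly one of these crosses from $S$ to $T$ when the labels disagree, and neither does when they agree. Summing over edges, every cut has cost $E(a)$ for the corresponding labeling, so a minimum cut yields a global minimizer. Polynomial time follows from any strongly polynomial max-flow algorithm.

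\textbf{Infinite capacities.} Hard pins ($\Rb(r)>0$ forces $\psi_r(V_2)=\infty$) become infinite source edges, which are never cut. This works provided some finite-cost labeling exists, which the all-$V_1$ labeling guarantees whenever the pinning costs are finite. For the must-move-together constraints ($\pi_c=\infty$), take the connected components of the subgraph of infinite-penalty edges. Any finite-energy labeling is constant on each component, so we contract each component to a super-node. Its unary cost is the sum of its members' unaries, and parallel finite edges are merged by summing capacities. Finite-energy labelings of the original problem then correspond bijectively and energy-preservingly to labelings of the contracted instance, so the minimizers coincide.

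\textbf{Main obstacle.} The construction above is standard; the delicate step is the dependence of $\pi_c$ on $a$ through $T(a)$, the drain time of the pinned side. As written, $\pi_c(a)$ would make the pairwise term higher-order and could break the reduction. We therefore fix $T$ before solving, for example at the pessimistic drain time of the maximal pinned population, or at the previous Monte Carlo draw's partition as in Algorithm~\ref{alg:wur}. Under that choice $\pi_c$ is a constant per draw and the theorem applies exactly. We would state this fixing explicitly as part of the hypothesis.
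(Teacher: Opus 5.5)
Your proof is correct and uses the same Greig--Porteous--Seheult/Kolmogorov--Zabih construction the paper's sketch invokes: terminal edges carry $\psi_r(\cdot)$, symmetric neighbor edges carry $\pi_c$, and submodularity follows from $0 \le 2\pi_c$. You spell out the cut-cost bijection, the hard pins, and the super-node contraction, which the paper leaves to the Kolmogorov--Zabih citation. Your point that $\pi_c$ depends on $a$ through $T(a)$ in \eqref{eq:penalty} is a fair sharpening: the theorem's phrase ``penalizing only label disagreement'' implicitly treats $\pi_c$ as a per-draw constant, which is exactly what you make explicit. One correction: Algorithm~\ref{alg:wur} never says to fix $T$ from the previous draw's partition, so that option is your addition, not something the paper does.
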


Figure~\ref{fig:cut} shows the construction on four runs: a coupled pair
whose channel penalty exceeds either run's assignment cost stays on one
side of the cut together, which is exactly the mixed-version-window
discipline the penalty encodes.

\begin{figure}[t]
\centering
\begin{tikzpicture}[node distance=7mm and 15mm]
\node[draw, circle, font=\scriptsize, inner sep=1.6pt] (s) {$s$};
\node[lab, above=4mm of s, align=center] {\textsc{migrate}\\ ($V_2$)};
\node[draw, circle, font=\scriptsize, right=58mm of s, inner sep=1.6pt]
  (t) {$t$};
\node[lab, above=4mm of t, align=center] {\textsc{pin}\\ ($V_1$)};
\node[site, right=of s, yshift=15.5mm, minimum width=8mm] (r1) {$r_1$};
\node[site, below=of r1, minimum width=8mm] (r2) {$r_2$};
\node[site, below=of r2, minimum width=8mm] (r3) {$r_3$};
\node[site, below=of r3, minimum width=8mm] (r4) {$r_4$};
\foreach \r in {1,2,3,4}{
  \draw[flow] (s) -- (r\r.west)
    node[tinylab, pos=0.82, above, sloped] {$\psi_{r_\r}(V_1)$};
  \draw[flow] (r\r.east) -- (t)
    node[tinylab, pos=0.18, above, sloped] {$\psi_{r_\r}(V_2)$};}
\draw[thick] (r1.south) -- (r2.north)
  node[tinylab, midway, left=1pt] {$\pi_{12}$};
\draw[thick] (r3.south) -- (r4.north)
  node[tinylab, midway, left=1pt] {$\pi_{34}$};
\draw[very thick, densely dashed]
  ($(r1.north east)+(14mm,3mm)$)
  .. controls ($(r2.east)+(12mm,-1mm)$) and ($(r3.west)+(-10mm,1mm)$) ..
  ($(r4.south west)+(-14mm,-3mm)$);
\node[lab, anchor=north, align=center] at
  ($(s.south)!0.5!(t.south)+(0,-27mm)$)
  {cut: $\{r_1,r_2\}\!\to\!V_2$ severing their edges to $t$;\ \
   $\{r_3,r_4\}\!\to\!V_1$ severing their edges from $s$};
\end{tikzpicture}
\caption{The migrate/pin partition as a single $s$--$t$ minimum cut.
Cutting a terminal edge pays the run's assignment cost; separating a
coupled pair (here saga partners $r_1, r_2$ and $r_3, r_4$ over
deterministic channels) pays the channel penalty $\pi_c$, so tightly
coupled runs cross versions only when both assignment costs justify it.
Runs with $\Rb > 0$ carry $\psi(V_2) = \infty$ and can never land on the
migrate side.}
\label{fig:cut}
\end{figure}

\begin{proof}[Proof sketch]
Standard construction: terminal edges carry the unary costs
$\psi_r(\cdot)$, neighbor edges carry $\pi_c$. Submodularity
$\theta(0,1) + \theta(1,0) \geq \theta(0,0) + \theta(1,1)$ holds since
disagreement costs $\pi_c \geq 0$ and agreement costs $0$; exactness of
the cut for submodular binary energies is~\cite{kz}.
\end{proof}

\begin{remark}[Review as label instability]\label{rem:review}
Under coupling, the \textsc{review} class is defined jointly rather than
by per-run thresholds: draw $(\rho, \tau, \lambda, P)$ from their
posteriors, re-solve the cut per draw, and route to \textsc{review}
exactly the runs whose optimal label flips across draws. The stable core
of the cut migrates or pins with quantified confidence.
\end{remark}

\begin{remark}[More than two versions]
With $k > 2$ versions live, \eqref{eq:energy} becomes a multiway-cut
energy, NP-hard in general; $\alpha$-expansion~\cite{bvz} gives strong
local optima with known approximation bounds. The operational reading is
a design rule the model makes precise: bound the number of concurrently
live versions to two, and exactness is preserved.
\end{remark}

Computationally the extension is light: the fixpoint iteration costs
$O(\mathrm{iters} \cdot |C|)$ with fast convergence in practice
(coupling graphs are sparse and shallow), and the cut is one max-flow on
$|\mathcal{R}|$ nodes---both dominated by the payload validation already
in Proposition~\ref{prop:complexity}.

\section{Limitations and Extensions}\label{sec:limitations}

\emph{Behavioral semantics.} $\Rf$ measures exposure to changed code, not
whether the change is desirable---an intentional bug fix and an
accidental regression score identically. Severity calibration
(\S\ref{sec:severity}) weights exposure by empirical harm, but cannot
distinguish intent.

\emph{Unobserved paths.} Branches never traversed in $\mathcal{H}$ carry
prior-dominated posteriors; new sites in $V_2$ have no telemetry at all.
The model is honest about this (wide intervals) but not clairvoyant.

\emph{The Markov assumption.} The forward chain is first-order over
sites; workflows whose routing depends on accumulated state (retry
budgets, aggregated results) have history-dependent transitions the
chain averages over. The hitting probabilities remain unbiased in
aggregate but can misrank individual runs; conditioning the chain on
frontier payload features (a semi-Markov or feature-augmented chain) is
a mechanical extension the estimation machinery of
\S\ref{sec:estimation} already supports, at the cost of thinner
per-state telemetry.

\emph{Cold start and priors.} Until history accumulates, severity and
transition posteriors are prior-dominated; the honest behavior is wide
intervals routing to \textsc{review}, not confident verdicts from
invented constants. Teams that must act before telemetry exists should
set deliberately conservative priors per change class and let the
credible-interval width report the cost of ignorance transparently; the
cold-start measurements of \S\ref{sec:evaluation} show the
distributional regime becoming informative within tens of completed
runs.

\emph{Cyclic coupling.} On dense coupling cycles the contagion fixpoint
overestimates (\S\ref{sec:evaluation}), always in the safe direction;
tighter cyclic estimates via mean-field corrections or loopy belief
propagation with convergence guarantees are future work.

\emph{Production validation.} The evaluation demonstrates soundness
against verified semantics and relevance on the reference repository's
own workflow corpus; it does not yet include production fleets, because
the telemetry the model consumes---run listings, event logs, recorded
payloads at fleet scale---is exactly the surface only a platform
operator holds. This is a structural property of the approach, not an
accident of the study: the model is designed to run \emph{where the
protocol data already lives}. Validating it against a production
deployment history (releases with known outcomes, replayed through the
analyzer counterfactually) requires operator collaboration and is the
single highest-value next step.

\emph{Order-observed concurrency.} \texttt{Promise.race}/\texttt{any} and
incremental awaiting retain order semantics by construction
(Definition~\ref{def:indep}); permutation sensitivity there is real, not
spurious, so such regions are simply less migrate-eligible. This is a
correct restriction, not an unsoundness, but it does reduce the
\textsc{migrate} class in racy workflows.

\emph{Dynamic demand.} Computed property accesses and reflective
consumption widen verdicts to \textsc{unk}
(\S\ref{sec:demand}), exerting \textsc{review} pressure in heavily
dynamic code. Lemma~\ref{lem:diffdemand} confines this to changed slices,
but a release that rewrites a reflective consumer will legitimately
resist certification.

\emph{Cold start.} Severity class calibration requires deployment
history; a fresh installation inherits cross-fleet priors and wide
intervals until its own releases accumulate.

\emph{Unobserved channels.} Coupling through systems that neither the
protocol nor step instrumentation observes is invisible; the coupling
graph is a lower bound on true coupling, and \eqref{eq:fix} correspondingly
a lower bound on joint risk. Instrumentation coverage is thus a direct
input to certification strength.

\emph{Transmission estimates.} $\tau$ is the weakest posterior in the
model: incidents are rare, so priors dominate for long. Conservative
per-kind defaults inflate \textsc{review}, never \textsc{migrate}.

\emph{Many concurrent versions.} The exact partition is binary;
$k > 2$ live versions require approximate multiway optimization
(\S\ref{sec:coupling}).

\section{Evaluation}\label{sec:evaluation}

We evaluate a reference implementation of the model against a synthetic
suite built around a ground-truth oracle that reproduces the verified
replay semantics of Remark~\ref{rem:conformance}: one identifier sequence
shared by all site kinds, ordinal-plus-name matching, in-order log
consumption, invocation-time demand evaluation, hard divergence on
unmatched events. A single scheduler drives both run simulation and
oracle replay, so replaying any log against its own version is safe by
construction (verified on 1{,}500 runs). The analyzer sees only the
static diff and telemetry; it never executes $V'$ against a log. All code
and experiment configurations accompany the paper.

\subsection{Scenario suite}

Eleven controlled scenarios cover the change classes of
\S\ref{sec:changes} over a fixed workflow with a parallel group, a
branch, timers, and partial-presence payload fields; each scenario is
evaluated against 400 in-flight runs suspended at uniformly random
frontiers, with the ground truth given by oracle replay of every run:

\begin{center}\small
\begin{tabular}{lrrrr}
\hline
scenario & predicted & true & FP & FN \\
\hline
S1 body-only edit & 0 & 0 & 0 & 0 \\
S2 consume always-present field & 0 & 0 & 0 & 0 \\
S3 consume partial field ($p{=}0.7$) & 103 & 103 & 0 & 0 \\
S4 insert step at head & 348 & 348 & 0 & 0 \\
S5 insert step at tail & 0 & 0 & 0 & 0 \\
S6 delete tail step & 36 & 36 & 0 & 0 \\
S7 swap step/timer & 307 & 307 & 0 & 0 \\
S8 \texttt{Promise.all} branch reorder & 194 & 194 & 0 & 0 \\
S9 predicate semantic change & 86 & 33 & 53 & 0 \\
S10 predicate refactor (no-op) & 86 & 0 & 86 & 0 \\
S11 insert timer mid-workflow & 194 & 194 & 0 & 0 \\
\hline
\end{tabular}
\end{center}

The exact-regime verdict agrees with the oracle perfectly on all
structural scenarios, including the case
Remark~\ref{rem:conformance} makes identity-sensitive: reordering the
branches of a join-observed parallel group (S8) is runtime-fatal for
every run whose prefix reaches the group, and the analyzer classifies
each of the 194 identically. S4 versus S5 shows the prefix rule at its
tightest: an insertion at the head invalidates every non-empty prefix,
while the same insertion at the tail invalidates none. The two predicate
scenarios exhibit the designed conservatism of syntactic matching: a
semantic threshold change (S9) is flagged for all 86 runs that crossed
the branch while 33 truly diverge, and a pure refactor (S10, identical
semantics, different syntax) is flagged for the same 86 while none
diverge. No scenario produces a false negative.

\subsection{Mutation study}

We generated 120 random workflows (nested parallel groups, branches,
timers, hooks, partial-presence fields) and applied every applicable
mutation operator from a pool of ten covering all change classes,
including producer deletions, producer--consumer order inversions, and
reorders of both join-observed and race branches, yielding 1{,}082
mutations evaluated against 60 in-flight runs each (64{,}920 run-level
verdicts):

\begin{center}\small
\begin{tabular}{lrrrr}
\hline
mutation class & TP & FP & FN & TN \\
\hline
body-only edit & 0 & 0 & 0 & 7{,}200 \\
demand growth (present field) & 519 & 296 & 0 & 6{,}385 \\
demand growth (partial field) & 1{,}061 & 268 & 0 & 5{,}751 \\
insert step & 3{,}686 & 0 & 0 & 3{,}514 \\
insert timer & 3{,}296 & 0 & 0 & 3{,}904 \\
delete site & 3{,}553 & 24 & 0 & 3{,}623 \\
swap adjacent sites & 4{,}321 & 15 & 0 & 2{,}864 \\
parallel/race branch reorder & 2{,}276 & 31 & 0 & 2{,}373 \\
predicate change (semantic) & 655 & 1{,}535 & 0 & 2{,}790 \\
predicate refactor (no-op) & 0 & 2{,}240 & 0 & 2{,}740 \\
\hline
total & 19{,}367 & 4{,}409 & 0 & 41{,}144 \\
\hline
\end{tabular}
\end{center}

Recall is exactly $1.0$---zero false negatives across all 64{,}920
verdicts. The epistemic status of this number deserves precision: it
validates that the implementation satisfies Theorem~\ref{thm:sound}
against an oracle enforcing the verified replay semantics; it is a
soundness check of the model against its own semantics, not a claim that
the model catches failures outside its scope (violated determinism
contracts, inference front-end errors on dynamic access, or
non-protocol side effects; see \S\ref{sec:limitations}). One class is
outside the oracle's observable surface by construction: the silent
result-crossing of Definition~\ref{def:matching}'s same-label swap
replays as \textsc{safe}---that is what \emph{silent} means---so
evaluating its detection requires a semantic oracle comparing final
states against fresh $V_2$ execution, which we leave as future work; the
fingerprint requirement that flags it statically is stated but not
exercised by this suite. Within scope,
perfect recall is not surprising---it is what the theorem requires, and
a single false negative would be an implementation bug, as three found
during development were.
Precision is $0.815$, and every false positive has an attributable
cause:
syntactic predicate matching (86\%), demands whose satisfiability crosses
conditional-arm or race-branch boundaries the scope analysis cannot prove
(13\%), and the conservative certainly-precedes prefix rule
(2\%)---i.e., the deliberate conservatism analyzed in
\S\ref{sec:limitations}, not the model. Precision has a direct
operational reading: $1 - \mathrm{precision}$ is the fraction of
fatal verdicts that needlessly pin a run, and its cost is exactly the
stranding liability and forgone-fix exposure that
\S\ref{sec:policy} prices---so the conservatism is not free, but it is
\emph{measured}, and every attributable cause above is also an
annotation target (predicate-equivalence hints alone would remove
86\% of it). Notably, demand-growth mutations include consumers that reach
across parallel branches; catching these requires the invocation-time
store of Definition~\ref{def:rb}, and validating against the
end-of-prefix store instead produces false negatives (an error mode this
suite caught in an earlier revision of the model).

\subsection{Real-workflow corpus}

Synthetic topologies establish soundness; real ones establish relevance.
The corpus is the workflow code shipped in the reference implementation's
own repository, \texttt{github.com/\allowbreak vercel/\allowbreak workflow} at commit
\href{https://github.com/vercel/workflow/tree/39673b7}{\texttt{39673b7}}
(2026-07-08): every function carrying the
\texttt{'use workflow'} directive under
\texttt{workbench/\allowbreak vitest/\allowbreak workflows} (including the \texttt{cookbook}
suite of documented canonical patterns) and under
\texttt{workbench/\allowbreak example/\allowbreak workflows} files \texttt{0}--\texttt{10},
excluding only the load/e2e stress files (\texttt{97}--\texttt{100})
whose hundred-odd micro-workflows exercise the runtime rather than
represent applications. This yields 45 workflow functions; the model
corpus has 44 entries (one thin child-spawning wrapper is folded into its
parent as the step invocation it is), of which 41 contain at least one
durable site, totaling 126 sites.

{\sloppy ``Hand-translated'' means the following mechanical procedure, whose
output is a reviewable artifact (one annotated constructor per workflow,
keyed by source path, in \texttt{real\_\allowbreak corpus.py} accompanying the
paper). (1)~Each \texttt{'use step'} call maps to a step whose payload
schema is the step body's \emph{actual} return-object fields;
TypeScript-optional or conditionally assigned fields get partial presence
probabilities, all others presence~1. (2)~\texttt{sleep()} maps to a
timer; \texttt{createWebhook()} or \texttt{hook.create()} followed by
\texttt{await} maps to a payload-bearing hook. (3)~\texttt{Promise.all}
and \texttt{allSettled} map to join-observed parallel groups with
branches in source-array order (order is identity);
\texttt{Promise.race} maps to a race group with the verified semantics
of Remark~\ref{rem:conformance}---branches invoked in fork order, the
continuation resumed at the first completion in log order, losers left
pending, consumable if their completions appear later in the log and
dropped at workflow return otherwise. A background invocation awaited
after intervening work is an additional branch at its invocation point.
(4)~\texttt{if}/\texttt{else} on step or hook results maps to a branch
whose predicate reads the recorded payload field the source reads;
\texttt{try}/\texttt{catch} compensation (the saga pattern) maps to a
branch into the compensation sequence in unwind order. (5)~Data-flow
demand edges are placed exactly where the source passes one site's result
into another call's arguments. (6)~Loops with source-fixed bounds are
unrolled at those bounds. Two mechanical checks validate the
translations: every one of 1{,}320 prefixes of simulated runs replays
\textsc{safe} against its own topology under the oracle, and the race
workflows exhibit all four log shapes the runtime can record (winner
only, winner then loser, loser interleaved after post-race sites, either
branch winning). Consumers reaching into race branches are treated by the
scope analysis as never-guaranteed, since a loser may never complete.
The known approximations are stated rather than hidden: loop unrolling
fixes iteration counts, and winner-dependent orchestrator transforms
(\texttt{.then} mappings on race results) are deterministic orchestrator
code, not payload demands, so they carry no demand edge.\par} The corpus spans the patterns the vendor
documents as canonical: sagas, fan-out, human-in-the-loop approval,
parent--child workflows over completion hooks, content routing, batching
with inter-batch timers, webhooks, and durable agents.

Applying the mutation pool to this corpus yields 227 mutations and
18{,}160 run-level verdicts (80 in-flight runs per workflow): recall is
again exactly $1.0$, precision $0.933$.

\begin{center}\small
\begin{tabular}{lrrrr}
\hline
mutation class (real corpus) & TP & FP & FN & TN \\
\hline
body-only edit & 0 & 0 & 0 & 3{,}120 \\
demand growth (present field) & 330 & 97 & 0 & 1{,}093 \\
demand growth (partial field) & 30 & 0 & 0 & 290 \\
insert step & 1{,}218 & 0 & 0 & 2{,}062 \\
insert timer & 1{,}142 & 0 & 0 & 2{,}138 \\
delete site & 1{,}362 & 0 & 0 & 1{,}518 \\
swap adjacent sites & 1{,}319 & 0 & 0 & 441 \\
parallel/race branch reorder & 328 & 10 & 0 & 222 \\
predicate change (semantic) & 12 & 141 & 0 & 567 \\
predicate refactor (no-op) & 0 & 162 & 0 & 558 \\
\hline
total & 5{,}741 & 410 & 0 & 12{,}009 \\
\hline
\end{tabular}
\end{center} Precision is higher than on the
synthetic suite because real topologies are shorter and their prefixes
more often provably unaffected. The corpus also improved the model:
producer deletion and producer--consumer order inversion---natural here,
where hook payloads feed later steps---produced false negatives in an
earlier revision, forcing the demand-satisfiability rule and the
$V_2$-side invocation gate of \S\ref{sec:demand}. The residual false
positives are again fully attributable: syntactic predicate matching
(74\%, concentrated in the router and single-statement control-flow
workflows), demand scope across conditional arms and race branches
(24\%), and the certainly-precedes rule on branch reorders (2\%).

\subsection{Calibration of distributional interface risk}

For 400 trials with the demanded field's presence probability drawn
uniformly from $[0.05, 0.95]$, the distributional-regime $\Rb$ estimated
from historical presence telemetry was compared with the empirical
replay-failure rate among affected in-flight runs. Expected calibration
error over ten bins is $0.016$, Pearson correlation $0.92$, mean absolute
error $0.087$---consistent with the claim that when payloads are not
retained, presence frequencies alone give a well-calibrated risk.

\subsection{Contagion fixpoint versus simulated cascades}

We compared the least-fixpoint estimate \eqref{eq:fix} against
Monte-Carlo cascade simulation (20{,}000 samples per graph) on random
coupling graphs, five graphs per configuration, reporting the signed
error $q^* - \hat q_{\mathrm{MC}}$ per node:

\begin{center}\small
\begin{tabular}{llrrr}
\hline
topology & $(n, \bar d)$ & worst over & worst under & mean $|{\cdot}|$ \\
\hline
acyclic & $(30, 1.5)$ & $0.053$ & $-0.009$ & $0.005$ \\
acyclic & $(30, 3.0)$ & $0.095$ & $-0.006$ & $0.016$ \\
acyclic & $(100, 2.0)$ & $0.072$ & $-0.009$ & $0.004$ \\
acyclic & $(100, 4.0)$ & $0.084$ & $-0.008$ & $0.014$ \\
cyclic & $(30, 1.5)$ & $0.325$ & $-0.002$ & $0.160$ \\
cyclic & $(30, 3.0)$ & $0.091$ & $-0.002$ & $0.035$ \\
cyclic & $(100, 2.0)$ & $0.273$ & $-0.004$ & $0.050$ \\
cyclic & $(100, 4.0)$ & $0.025$ & $-0.005$ & $0.002$ \\
\hline
\end{tabular}
\end{center}

The ``worst under'' column never drops below $-0.009$, which is within
Monte-Carlo noise at 20{,}000 samples: empirically the fixpoint is a
\emph{one-sided, conservative} estimate, consistent with the positive
association of percolation-style cascade events (Harris/FKG), under which
the noisy-or independence assumption can only overestimate. On acyclic
coupling the bound is tight; on cyclic coupling, feedback echo inflates
the estimate, worst where a sparse cycle concentrates the echo.
Overestimation moves runs from \textsc{migrate} toward
\textsc{review}/\textsc{pin}, never the reverse, so the approximation
errs in the safe direction; tighter cyclic estimates are future work.

\subsection{Partition optimality and cost}

Across 200 random instances of 12 runs with dense coupling, the min-cut
partition of \S\ref{sec:coupling} attains exactly the brute-force
minimum energy (maximum gap $0.0$ over all $2^{12}$ labelings per
instance), as Theorem~\ref{thm:cut} requires. One implementation
note: computing the partition side of the cut with floating-point
capacities corrupted residual reachability in a standard max-flow
library; integer-scaled capacities restore exactness.

Backward risk in the exact regime costs 9--15\,$\mu$s per run in a pure
Python reference implementation (single core): a fleet of 50{,}000
in-flight runs is classified in $0.72$\,s, three orders of magnitude
inside the per-release budget of Proposition~\ref{prop:complexity},
before any of the prefix-hash deduplication that production fleets
admit. Scaling in workflow \emph{size} is linear in prefix length, as
Proposition~\ref{prop:complexity} predicts: per-run $\Rb$ grows from
$4\,\mu$s at 10 sites to $3.6$\,ms at a pathological 2{,}000 sites
(with the one-time static diff at $0.5$\,s), so even a fleet of
10{,}000 runs of thousand-step workflows classifies in under a minute
on one core. Telemetry volume matters least where one might fear it
most: the distributional regime's error against ground truth falls from
MAE $0.15$ with only ten completed historical runs to $0.09$ at three
hundred, at which point the residual is the binomial noise of the
affected-run sample rather than the estimator---useful predictions
begin at tens of runs, not thousands.

\section{Conclusion}

Durable workflow engines already pay the storage cost of complete
execution histories; this paper converts that cost into an analytical
asset. Because the event log is the protocol, upgrade risk in this domain
splits cleanly into a component that can be computed exactly from recorded
prefixes and payloads, and a component that is genuinely probabilistic and
estimable, with honest uncertainty, from complete path histories. The
resulting $\WUR$ score turns version-skew handling from a uniformly
pessimistic mechanism into a measured decision: migrate what is provably
safe, review what is uncertain, pin what is provably not---and know, in
advance, what each release will cost the fleet.

\paragraph{Availability.}
The reference implementation (a single-file analyzer consuming only
\texttt{runs.list} and \texttt{events.list}), the hand-translated
real-workflow corpus keyed to source paths at the stated commit, the
full experiment suite, and the raw results accompany this paper. We
invite platform operators holding production deployment histories to
collaborate on counterfactual validation---replaying past releases with
known outcomes through the analyzer---which is the highest-value
experiment this paper cannot run alone; contact the authors.

\end{document}